\pgfplotsset{compat=newest}
\theoremstyle{plain}
\newtheorem*{theo*}{Theorem}
\newtheorem{theorem}{Theorem}[section]
\newtheorem{proposition}[theorem]{Proposition}
\newtheorem{lemma}[theorem]{Lemma}
\newtheorem{definition}[theorem]{Definition}
\newtheorem{hyp}{Assumption}
\theoremstyle{remark}
\newtheorem{remark}[theorem]{Remark}
\newtheorem*{rem*}{Remark}
\DeclareMathOperator*{\essinf}{ess\,inf}      % essential infimum
\DeclareMathOperator\spec{Spec}               % supremum
\newcommand{\cb}{\ensuremath{\mathscr{B}}}
\newcommand{\cf}{\ensuremath{\mathscr{F}}}
\newcommand{\cg}{\ensuremath{\mathscr{G}}}
\newcommand{\ck}{\ensuremath{\mathscr{K}}}
\newcommand{\ind}[1]{\mathbb{1}_{#1}}
\newcommand{\un}{\mathbb{1}}
\newcommand{\zero}{\mathbb{0}}
\newcommand{\ca}{\ensuremath{\mathscr{A}}}
\newcommand{\cl}{\ensuremath{\mathscr{L}^\infty}}
\newcommand{\gxx}{$[(\Omega, \cf, \mu), \mathrm{k}]$}
\date{\today}
\author{Jean-François Delmas}
\address{Jean-François Delmas,
  CERMICS, \'{E}cole nationale des ponts et chauss\'ees, 77455
  Marne-la-Vall\'ee, France} 
\email{jean-francois.delmas@enpc.fr}
\author{Dylan Dronnier}
\address{Dylan Dronnier,
  CERMICS, \'{E}cole nationale des ponts et chauss\'ees, 77455
  Marne-la-Vall\'ee, France} 
\email{dylan.dronnier@enpc.fr}
\author{Pierre-André Zitt}
\address{Pierre-André Zitt, LAMA, Université Gustave Eiffel, 77420
Champs-sur-Marne, France}
\email{pierre-andre.zitt@univ-eiffel.fr}
\newcommand{\abs}[1]{\left\lvert\,#1\,\right\rvert}
\newcommand{\norm}[1]{\left\lVert\,#1\,\right\rVert}
\newcommand{\R}{\ensuremath{\mathbb{R}}}
\newcommand{\C}{\ensuremath{\mathbb{C}}}
\newcommand{\N}{\ensuremath{\mathbb{N}}}
\newcommand{\kk}{\ensuremath{\mathrm{k}}}
\newcommand{\rd}{\ensuremath{\mathrm{d}}}
\newcommand{\cll}{\ensuremath{\mathcal{L}}}
\newcommand{\Tinf}{\ensuremath{\mathcal{T}}}
\newcommand{\costu}{\ensuremath{C_\mathrm{uni}}}
\newcommand{\oa}{\ensuremath{\Omega_\mathrm{a}}}
\newcommand{\oaa}{\ensuremath{\mathrm{a}}} %indice de \oa
\newcommand{\oi}{\ensuremath{\Omega_\mathrm{i}}}
\newcommand{\vg}{\ensuremath{v_\mathrm{g}}}
\newcommand{\vd}{\ensuremath{v_\mathrm{d}}}
\newcommand{\I}{\ensuremath{\mathfrak{I}}}
\newsavebox{\largestimage}
\pgfplotsset{colormap/Paired-6}
\newcommand{\loss}{\ensuremath{\mathrm{L}}}
\newcommand{\param}{\ensuremath{\mathrm{Param}}}
\newcommand{\gxxx}{$[(\Omega, \cf, \mu), k, \gamma]$}
\newcommand{\grR}{R_e[\kk]}
\newcommand{\grS}{\spec[\kk]}
\newcommand{\ka}{\ensuremath{\kk_\mathrm{a}}}
\newcommand{\cfi}{\ensuremath{\cf_\mathrm{inv}}}
\newcommand{\co}{\mathcal O}
\newcommand{\Deltad}{\ensuremath{\mathbf{\Delta}}}
\newcommand{\Deltac}{\ensuremath{\mathit{\Delta}}}
\title[%
  Topological  properties of the effective reproduction number%
  ]{%
  Topological properties of the effective reproduction number in
  an heterogeneous  SIS model%
}
\begin{document}

\thanks{This work is partially supported by Labex Bézout reference
  ANR-10-LABX-58 and  SNF 200020-19699}

\subjclass[2010]{92D30, 58E17, 47B34, 34D20}
% 92D30 Epidemiology
% 47B34 Kernel operators
% 58E17 Multiobjective variational problems, Pareto optimality,
% applications to economics, etc.
% 34D20 Stability of solutions to ordinary differential equations

\keywords{SIS Model, infinite-dimensional ODE, vaccination strategy,
 effective reproduction number, kernel operator, irreducibility}

\begin{abstract}
 This present results lay the foundations  for the study of the optimal
  allocation of  vaccine in the  simple epidemiological SIS  model where
  one consider a very general  heterogeneous population.  In the present
  setting each  individual has a  type $x$ belonging to a  general 
  space,  and  a  vaccination  strategy   is  a  function  $\eta$  where
  $\eta(x)\in [0, 1]$ represents  the proportion of non-vaccinated among
  individuals  of  type  $x$.   We shall  consider  two  loss  functions
  associated  to a  vaccination  strategy $\eta$:  either the  effective
  reproduction number, a classical quantity  appearing in many models in
  epidemiology, and  which is  given here  by the  spectral radius  of a
  compact  operator  that depends  on~$\eta$; or  the
  overall  proportion  of  infected individuals  after
  vaccination in  the maximal 
  endemic state.

  By considering the weak-* topology on the set $\Deltad$ of vaccination
  strategies, so that it  is a compact set, we can  prove that those two
  loss  functions   are  continuous  using  the   notion  of  collective
  compactness for a family of  operators.  We also prove their stability
  with respect  to the parameters  of the  SIS model. Eventually, we  consider their
  monotonicity and  related properties in  particular when the  model is
  ``almost'' irreducible.
\end{abstract}

\maketitle

\section{Introduction}

\subsection{Motivation}

Increasing  the prevalence  of  immunity from  contagious  disease in  a
population limits the circulation of the infection among the individuals
who lack  immunity. This so-called  ``herd effect'' plays  a fundamental
role in epidemiology as it has had  a major impact in the eradication of
smallpox  and  rinderpest  or  the  near  eradication  of  poliomyelitis
\cite{HerdImmunityFine2011}. It is of course unrealistic to depict human
populations as homogeneous, and  many generalizations of the homogeneous
model have been studied; see \cite[Chapter 3]{keeling_modeling_2008} for
examples and further references.  Targeted vaccination strategies, based
on the heterogeneity  of the infection spreading in  the population, are
designed to  increase the  level of  immunity of  the population  with a
limited quantity of vaccine. These strategies rely on identifying groups
of individuals  that should be vaccinated  in priority in order  to slow
down  or eradicate  the  disease. It is assumed the vaccine is perfect
and provide an ever lasting immunity.

In  this  article,  we  consider  two  loss  functions  to  measure  the
effectiveness of targeted vaccination strategies with perfect vaccine in
the deterministic infinite-dimensional SIS model (with S=Susceptible and
I=Infectious)  introduced   in  \cite{delmas_infinite-dimensional_2020},
that  encompasses as  particular cases  the SIS  model on  graphs or  on
stochastic block models.

The  first one  is  the so-called  \emph{effective reproduction  number}
$R_e$  defined  as  the  number   of  secondary  cases  one  ``typical''
infectious  individual  generates on  average  over  the course  of  its
infectious  period,   in  an  otherwise  uninfected   (susceptible)  and
non-vaccinated population.   When there is no  vaccination, this reduces
to the  \emph{basic reproduction number} denoted  by~$R_0$.  This latter
number plays a  fundamental role in epidemiology as it  provides a scale
to  measure  how   difficult  an  infectious  disease   is  to  control,
see~\cite{Diekmann1990}.   Intuitively,  the   disease  should  die  out
if~$R_0<1$ (sub-critical  regime) and  invade the  population if~$R_0>1$
(super-critical  regime).  For  many  classical  mathematical models  of
epidemiology, such  as SIS or  S(E)IR (with R=Recovered  and E=Exposed),
this intuition can be made  rigorous: the quantity~$R_0$ may be computed
from the parameters of the model, and the threshold phenomenon occurs.

The  second  one  is  the  fraction  $\I$  of  infected  individuals  at
equilibrium,  and  set   $\I_0$  when  there  is   no  vaccination.  (In
particular, one get for the SIS  model that $\I_0=0$ in the sub-critical
regime $R_0\leq  1$.) For  a SIR  model, distributing  vaccine so  as to
minimize the  attack rate (that  is, the proportion of  individuals that
eventually catch (and recover from) the  disease) is at least as natural
as trying  to minimize  the reproduction number;  this problem  has been
studied       for        example       in~\cite{TheMostEfficiDuijze2016,
  DoseOptimalVaDuijze2018}.

\medskip

  The simplicity of the  SIS model allows us
to study the  regularity of the  loss  functions $r_e$ and $\I$ under minimal
assumptions for  general non-homogeneous populations,  using theoretical
properties on the  spectral radius of integral  operators and properties
of the maximal equilibrium of the SIS infinite dimensional ODE.
The mathematical  foundation developed here  allows us to  study Pareto optimal
vaccination in SIS model in~\cite{ddz-theory-optim}, when taking into
account the cost a the vaccination strategy,  and illustrate those
results     in    particular     cases     and    specific     examples,
see references therein. %~\cite{ddz-cordon,ddz-hit,ddz-reg}.
Furthermore,   we  expect   the
results obtained  for the  SIS model  to be generic,  in the  sense that
behaviours exhibited here should be  also observed in more realistic and
complex models in epidemiology  for non-homogeneous populations; in this
direction, see for example the discussion in \cite{ddz-Re}.

\subsection{Main results}

The   differential  equations   governing  the   epidemic  dynamics   in
metapopulation~SIS  models were  developed by  Lajmanovich and  Yorke in
their     pioneer     paper~\cite{lajmanovich1976deterministic}.      In
\cite{delmas_infinite-dimensional_2020},   we   introduced   a   natural
generalization of their equation, to a possibly infinite space~$\Omega$,
where~$x  \in   \Omega$  represents   a  feature  and   the  probability
measure~$\mu(\mathrm{d} x)$  represents the  fraction of  the population
with
feature~$x$. Following~\cite[Section~5]{delmas_infinite-dimensional_2020},
we    represent    a    vaccination    strategy    by    a    measurable
function~$\eta: \Omega  \rightarrow [0, 1]$,  where~$\eta(x)$ represents
the fraction of \textbf{non-vaccinated} individuals with feature~$x$. In
particular, the ``strategy'' that consists in vaccinating no one (resp.\
everybody) corresponds to  $\eta = \un$, the constant  function equal to
1, (resp.\ $\eta = \zero$, the  constant function equal to 0). We denote
by~$\Deltac$ the set of strategies.

\subsubsection{Regularity of the effective reproduction function~$R_e$}

We consider  the effective reproduction  function in a  general operator
framework which we call the  \emph{kernel model}. This model, which will
be   defined  in   detail   below   in  Section~\ref{sec:settings},   is
characterized by  a measured  space~$(\Omega, \cf,  \mu)$, with  $\mu$ a
non-zero   $\sigma$-finite  measure,   and  a   measurable  non-negative
kernel~$\kk:  \Omega \times  \Omega  \to \R_+$.  Considering the  kernel
model with a  general measure $\mu$ instead of a  probability measure is
in particular motivated by~\cite{ddz-Re, ddz-cordon}. Let $T_\kk$ be the
corresponding integral operator defined on some linear subspace of real-valued
measurable functions by:
\[
  T_\kk(h): \, x\mapsto \,  \int_\Omega \kk(x,y) h(y) \, \mu(\mathrm{d}y).
\]
In  the   setting  of~\cite{delmas_infinite-dimensional_2020}   (see  in
particular Equation~(11)  therein), $T_\kk$ is the  so-called \emph{next
  generation operator}, where the kernel~$\kk$  is defined in terms of a
transmission rate kernel $k(x,y)$  and a recovery rate function~$\gamma$
by   the  product~$\kk(x,y)=k(x,y)/\gamma(y)$;   the  \emph{reproduction
  number}~$R_0$    is   then    the   spectral    radius   $\rho(T_\kk)$
of~$T_\kk$.

\medskip

The effective
reproduction number associated to the vaccination strategy~$\eta\in
\Deltac$ is given by:
\begin{equation}\label{eq:def-Re-intro}
  R_e(\eta) = \rho(T_{\kk \eta}),
\end{equation}
where~$\rho$ stands for the spectral radius and~$\kk\eta$ stands for the kernel~$(\kk
\eta)(x,y)=\kk(x,y) \eta(y)$. 
In particular, we have~$R_e(\un) =
R_0$ (resp.\ $R_e(\zero) = 0$).

\medskip

Motivated by vaccine allocation optimization, we shall consider a
topology on $\Deltac$ such that it is compact and the function~$R_e$
is continuous.  It is natural to try and prove this continuity by
writing~$R_e$ as the composition of the spectral radius~$\rho$ and the
map~$\eta \mapsto T_{\kk\eta}$.  The spectral radius is indeed
continuous at compact operators (and $T_{\kk \eta}$ is in fact compact
under a technical integrability assumption on the kernel~$\kk$
formalized on page~\pageref{hyp:k} as Assumption~\ref{hyp:k}), if we
endow the set of bounded operators with the operator norm topology;
see \cite{newburgh1951, burlando}.  However, this only works if we
equip~$\Deltac$ with the uniform topology, for which it is not
compact.

We instead consider $\Deltad$, the set of functions in~$\Deltac$ where
functions which are $\mu$-a.e.\ equal are identified, endowed with the
weak-* topology for which compactness holds; see
Lemma~\ref{lem:D-compact}.  This forces us to equip the space of
bounded operators with the strong topology, for which the spectral
radius is in general not continuous
\cite[p.~431]{kato2013perturbation}.  However, the family of
operators~$(T_{\kk\eta}, \, \eta \in \Deltac)$ is \emph{collectively
  compact} which enables us to recover continuity, using a series of
results obtained by Anselone~\cite{anselone}.  After noticing that the
function $R_e$ coincide on functions which are $\mu$-a.e.\ equal, o
that $R_e$ is indeed well defined on $\Deltad$, this leads to the
following statement, proved in Theorem~\ref{th:continuity-R} below.
We recall that Assumption~\ref{hyp:k}, formulated on
page~\pageref{hyp:k}, provides an integrability condition on the
kernel~$\kk$.

\begin{theorem}[Continuity of the spectral radius]
  %%% THEOREM INTRO : CONTINUITE DE R_E
  Under    Assumption~\ref{hyp:k}     on    the     kernel~$\kk$,    the
  function~$R_e  \,  \colon \,  \Deltad  \to  \R_+$ is  continuous  with
  respect to the weak-* topology on~$\Deltad$.
\end{theorem}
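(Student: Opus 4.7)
The plan is to factor $R_e$ as $R_e = \rho \circ \Phi$, where $\Phi \colon \Deltad \to \mathcal{L}(X)$ is the operator-valued map $\eta \mapsto T_{\kk\eta}$ on a well-chosen Banach space $X$ of (classes of) measurable functions. Compactness of $\Deltad$ for the weak-* topology is already noted in Lemma~\ref{lem:D-compact}, so once continuity of $R_e$ is established the image will be compact in $\R_+$. The naive route of using the norm topology on $\mathcal{L}(X)$ and continuity of $\rho$ on the set of compact operators (as in \cite{newburgh1951, burlando}) fails, because $\Phi$ is not norm-continuous for weak-* convergence of $\eta$. The fallback is the strong operator topology, on which $\rho$ is in general discontinuous; the discontinuity disappears, however, on a collectively compact family converging strongly, by Anselone's results. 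So the proof reduces to two tasks: (i) show $\Phi$ is continuous from $(\Deltad, \text{weak-*})$ to $(\mathcal{L}(X), \text{strong})$, and (ii) show $\{T_{\kk\eta} : \eta \in \Deltad\}$ is collectively compact.

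For (i), I would fix $h \in X$ and analyse $(T_{\kk\eta}h)(x) = \int_\Omega \kk(x,y)\, \eta(y)\, h(y)\, \mu(\mathrm{d}y)$. Under Assumption~\ref{hyp:k}, the function $y \mapsto \kk(x,y) h(y)$ lies in $L^1(\mu)$ for $\mu$-almost every $x$, so this integral is literally the duality pairing between $L^\infty(\mu)$ and $L^1(\mu)$ evaluated at $\eta$; it is therefore weak-* continuous in $\eta$ pointwise in $x$. The same assumption supplies a $\mu$-integrable dominant of the form $\kk(x,\cdot)|h(\cdot)|$, so pointwise convergence can be promoted to convergence in $X$ by dominated convergence. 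For (ii), I would use the pointwise domination $|T_{\kk\eta}h| \leq T_\kk |h|$ valid for every $\eta \in \Deltad$: if $T_\kk$ is compact on $X$ (a property that Assumption~\ref{hyp:k} is tailored to guarantee), then the image of the unit ball of $X$ under the full family $\{T_{\kk\eta}\}$ is contained in a compact set, which is the very definition of collective compactness.

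With both ingredients, I would apply Anselone's theorem: if $A_n \to A$ strongly and $\{A_n\}$ is collectively compact with $A$ compact, then $\rho(A_n) \to \rho(A)$. Since the weak-* topology on $\Deltad$ is metrizable (as $\Deltad$ is a bounded subset of $L^\infty(\mu)$ with $\mu$ $\sigma$-finite, hence $L^1(\mu)$ separable under mild assumptions, or else by arguing directly with nets), sequential continuity suffices. Well-definedness of $R_e$ on $\mu$-equivalence classes is immediate since changing $\eta$ on a $\mu$-null set leaves the kernel $\kk\eta$ $\mu\otimes\mu$-a.e.\ unchanged, hence the operator $T_{\kk\eta}$ unchanged. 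The main obstacle is choosing the Banach space $X$ so that simultaneously (a) $T_{\kk\eta}$ acts on $X$ and is compact, (b) the test integrals fall inside the $L^\infty$--$L^1$ duality used for weak-* convergence, and (c) the dominating operator $T_\kk$ has its values back in $X$; this is exactly the balance that Assumption~\ref{hyp:k} is designed to strike, and most of the technical care in the proof will go into verifying that this choice works uniformly in $\eta$.
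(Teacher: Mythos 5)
Your overall strategy is exactly the paper's: factor $R_e$ through $\eta\mapsto T_{\kk\eta}$ on $X=L^p$, prove strong convergence along weak-* convergent sequences by testing against $y\mapsto\kk(x,y)h(y)\in L^1$ and upgrading with dominated convergence, establish collective compactness of $\{T_{\kk\eta}:\eta\in\Deltad\}$, and conclude with Anselone's theorem (Lemma~\ref{lem:prop-spec-mult}). Two of your justifications, however, do not hold as written. First, collective compactness: the pointwise domination $|T_{\kk\eta}h|\le T_\kk|h|$ does \emph{not} place the set $\{T_{\kk\eta}h \,\colon\, \norm{h}_p\le 1,\ \eta\in\Deltad\}$ inside a compact set. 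Being dominated by elements of a relatively compact set proves nothing about relative compactness: an order interval $\{f\,\colon\, |f|\le g\}$ in $L^p$ is not norm compact (e.g.\ the functions $\sin(n\,\cdot)$ in $L^2[0,1]$ are all dominated by $\un$ and have no norm-convergent subsequence). The correct one-line argument, which is the one the paper uses, is a set inclusion rather than a domination: with $B$ the unit ball of $L^p$, one has $\eta B\subset B$, hence $T_{\kk\eta}(B)=T_\kk(\eta B)\subset T_\kk(B)$, and the latter is relatively compact because $T_\kk$ is compact.

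Second, the passage from sequential continuity to continuity. Metrizability of the weak-* topology on $\Deltad$ requires separability of $L^1(\mu)$, which $\sigma$-finiteness of $\mu$ does not guarantee; and the fallback of ``arguing directly with nets'' breaks precisely at your dominated-convergence step, since pointwise convergence plus domination does not yield $L^p$ convergence along nets. The paper sidesteps both problems with Lemma~\ref{lem:D-compact}: since $\mu$ is $\sigma$-finite, $L^1$ is weakly compactly generated, so by Amir--Lindenstrauss the unit ball of $L^\infty$ is weak-* angelic; hence $\Deltad$ is a sequential space and sequentially continuous maps on it are automatically continuous. You cite that lemma only for compactness, whereas its sequential-space conclusion is where it is actually indispensable here. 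A further minor slip: the dominant needed to upgrade $T_{\kk\eta_n}h\to T_{\kk\eta}h$ from pointwise to $L^p$ convergence must be a function of $x$, namely $K(x)\norm{h}_p$ with $K(x)=\left(\int_\Omega\abs{\kk(x,y)}^q\,\mu(\rd y)\right)^{1/q}\in L^p$, not the dominant $\kk(x,\cdot)\,|h(\cdot)|$ of the inner integral. With these repairs your argument coincides with the paper's proof.
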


In fact,  we also prove the  continuity of the spectrum  with respect to
the Hausdorff distance  on the set of compact subsets  of~$\C$. We shall
write~$R_e[\kk]$ to stress  the dependence of the  function~$R_e$ in the
kernel~$\kk$. In Proposition \ref{prop:Re-stab},  we prove the stability
of~$R_e$,  by giving  natural  sufficient conditions  on  a sequence  of
kernels~$(\kk_n,   n\in    \N)$   converging   to~$\kk$    which   imply
that~$R_e[\kk_n]$  converges uniformly  towards~$R_e[\kk]$. This  result
has both  a theoretical  and a  practical interest:  the next-generation
operator is  unknown in  practice, and  has to  be estimated  from data.
Thanks to  this result, the value  of $R_e$ computed from  the estimated
operator is a converging approximation of the true value.

\subsubsection{Regularity of the total proportion of infected population function~$\I$}

We consider the \emph{SIS model} from
\cite{delmas_infinite-dimensional_2020}. This model is characterized
by a probability space $(\Omega, \cf,\mu)$, the transmission
kernel~$k \, \colon \, \Omega \times \Omega \to \R_+$ and the recovery
rate~$\gamma \, \colon \, \Omega \to \R_+^*$. We suppose in the
following that the technical Assumption~\ref{hyp:k-g}, formulated on
page~\pageref{hyp:k-g}, holds, so that the SIS dynamical evolution is
well defined.

This evolution is encoded as~$u=(u_t, t\in \R_+)$,
where~$u_t\in \Deltac$ for all~$t$ and $u_t(x)$ represents the
probability of an individual with feature~$x\in \Omega$ to be infected
at time~$t\geq 0$, and follows the equation:
\begin{equation}
  \label{eq:SIS-intro}
  \partial_t u_t = F(u_t)\quad\text{for } t\in \R_+,
  \quad\text{where}\quad F(g) = (\un - g) \Tinf_k (g) - \gamma g\quad\text{for } g\in \Deltac,
\end{equation}
with  an initial  condition~$u_0  \in \Deltac$  and  with $\Tinf_k$  the
integral operator corresponding  to the kernel $k$ acting on  the set of
bounded  measurable  functions,  see \eqref{eq:def-Tk}.   It  is  proved
in~\cite{delmas_infinite-dimensional_2020}  that   such  a  solution~$u$
exists    and   is    unique    under   Assumption~\ref{hyp:k-g}.     An
\emph{equilibrium} of~\eqref{eq:SIS-intro} is a function~$g \in \Deltac$
such       that       $F(g)        =       0$.        According       to
\cite{delmas_infinite-dimensional_2020},   there    exists   a   maximal
equilibrium~$\mathfrak{g}$, \textit{i.e.}, an  equilibrium such that all
other     equilibria      $h     \in     \Deltac$      are     dominated
by~$\mathfrak{g}$:~$h     \leq    \mathfrak{g}$.      Furthermore,    we
have~$R_0\leq  1$  (sub-critical  and  critical  regimes)  if  and  only
if~$\mathfrak{g}=0$. In  the non-trivial connected case  (for example if
$k>0$),  then~$\zero$ and~$\mathfrak{g}$  are the  only equilibria,  and
$\mathfrak{g}$ is the long-time  distribution of infected individuals in
the population:~$\lim_{t\rightarrow+\infty } u_t = \mathfrak{g}$ as soon
as      the      initial       condition      is      non-zero;      see
\cite[Theorem~4.14]{delmas_infinite-dimensional_2020}.

 \medskip

 According to \cite[Section~5.3]{delmas_infinite-dimensional_2020},
 the SIS equation with vaccination strategy~$\eta$ is given
 by~\eqref{eq:SIS-intro}, where~$F$ is replaced by $F_\eta$ defined
 by:
\[
  F_\eta(g) = (\un-g)T_{k\eta}(g) - \gamma g.
\]
and~$u_t$ now describes the proportion of infected \emph{among the
  non-vaccinated population}. We denote by $\mathfrak{g}_\eta$ the
corresponding maximal equilibrium (thus considering~$\eta=\un$ gives
$\mathfrak{g}=\mathfrak{g}_{\un}$), so
that~$F_\eta(\mathfrak{g}_\eta)=0$. Since the probability for an
individual~$x$ to be infected in the stationary regime
is~$\mathfrak{g}_\eta(x) \, \eta(x)$, the \emph{fraction of infected
  individuals at equilibrium},~$\I(\eta)$, is thus given by:
\begin{equation}%
  \label{eq:def-I-intro}
  \I (\eta)=\int_\Omega \mathfrak{g}_\eta
  \, \eta\, \mathrm{d}\mu=\int_\Omega \mathfrak{g}_\eta(x)
  \, \eta(x)\, \mu(\mathrm{d}x).
\end{equation}

In the SIS model the quantity~$\I$  appears as a natural analogue of the
attack  rate for  SIR models,  and is  therefore a  natural optimization
objective.

\medskip

We obtain results on the functional~$\I$ that are very similar to the
ones on~$R_e$. Recall that Assumption~\ref{hyp:k-g} on
page~\pageref{hyp:k-g} ensures that the infinite-dimensional SIS
model, given by equation \eqref{eq:SIS-intro}, is well defined. The
next theorem corresponds to Theorem~\ref{th:continuity-I}.

\begin{theorem}[Continuity of the equilibrium infection size]
  %%% THEOREM : CONTINUITY I
  Under Assumption~\ref{hyp:k-g}, the
  function~$\I \, \colon \, \Deltad \to \R_+$ is continuous with
  respect to the weak-* topology on~$\Deltad$.
\end{theorem}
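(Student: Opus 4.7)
The strategy is to adapt the collective-compactness approach used for $R_e$ while also exploiting the monotone structure of the SIS equilibrium equation. Since bounded subsets of $L^\infty(\mu)$ are metrizable for the weak-* topology, it suffices to prove sequential continuity: pick $(\eta_n)\subset\Deltad$ with $\eta_n\to\eta$ weakly-*, and show $\I(\eta_n)\to\I(\eta)$.

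\medskip

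Using the compactness of $\Deltad$, after extracting a subsequence (not relabeled) assume $\mathfrak{g}_{\eta_n}\to g^\star\in\Deltad$ weakly-* and, after a further extraction, $\eta_n\mathfrak{g}_{\eta_n}\to h^\star$ weakly-*. Since $T_{k\eta_n}(\mathfrak{g}_{\eta_n}) = \Tinf_k(\eta_n\mathfrak{g}_{\eta_n})$ and $\Tinf_k$ is compact under Assumption~\ref{hyp:k-g} (this being the same ingredient used to continuously pass to the limit in the proof for $R_e$), we have $\Tinf_k(\eta_n\mathfrak{g}_{\eta_n})\to\Tinf_k(h^\star)$ in norm. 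The equilibrium equation $F_{\eta_n}(\mathfrak{g}_{\eta_n})=0$ rewrites as
\[
  \mathfrak{g}_{\eta_n} \;=\; \frac{\Tinf_k(\eta_n\mathfrak{g}_{\eta_n})}{\gamma + \Tinf_k(\eta_n\mathfrak{g}_{\eta_n})},
\]
so $\mathfrak{g}_{\eta_n}$ in fact converges \emph{strongly} (pointwise $\mu$-a.e.\ and dominated by $\un$) to $g^\star=\Tinf_k(h^\star)/(\gamma+\Tinf_k(h^\star))$. Strong convergence of $\mathfrak{g}_{\eta_n}$ combined with weak-* convergence of $\eta_n$ then forces $h^\star=\eta g^\star$, so $F_\eta(g^\star)=0$, and by maximality $g^\star\le\mathfrak{g}_\eta$.

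\medskip

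The main obstacle is the reverse inequality $g^\star\ge\mathfrak{g}_\eta$: since the set of equilibria may contain more than just $\zero$ and $\mathfrak{g}_\eta$ in the non-connected case, a weak-* limit of maximal equilibria need not be \emph{a priori} maximal. A natural approach is to characterize $\mathfrak{g}_\eta$ as the long-time limit of the SIS semigroup $u^\eta_t$ launched at $u_0=\un$, to establish continuous dependence $u^{\eta_n}_t\to u^\eta_t$ for each fixed $t$ (again via the collective compactness mechanism, applied now to the forcing term in the ODE), and to exchange the limits $n\to\infty$ and $t\to\infty$ using the monotonicity of $t\mapsto u_t^\eta$ decreasing to $\mathfrak{g}_\eta$. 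The sub-critical and critical cases $R_e(\eta)\le 1$ are trivial since $\mathfrak{g}_\eta=0$, and continuity of $R_e$ from the previous theorem ensures these trivialities are preserved in the limit.

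\medskip

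Once $g^\star=\mathfrak{g}_\eta$ is obtained, uniqueness of the subsequential limit upgrades to full-sequence convergence $\mathfrak{g}_{\eta_n}\to\mathfrak{g}_\eta$ in, say, $L^1(\mu)$. Since $\mathfrak{g}_\eta\in L^1(\mu)$ and the $\mathfrak{g}_{\eta_n}$ are uniformly bounded by $\un$, splitting
\[
  \int_\Omega \mathfrak{g}_{\eta_n}\eta_n\,\mathrm{d}\mu \;=\; \int_\Omega (\mathfrak{g}_{\eta_n}-\mathfrak{g}_\eta)\,\eta_n\,\mathrm{d}\mu \;+\; \int_\Omega \mathfrak{g}_\eta\,\eta_n\,\mathrm{d}\mu
\]
shows that the first term vanishes by the $L^1$-convergence and boundedness of $\eta_n$, while the second converges to $\int \mathfrak{g}_\eta\,\eta\,\mathrm{d}\mu$ by weak-* convergence of $\eta_n$ tested against $\mathfrak{g}_\eta$. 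The conceptually new ingredient compared with the $R_e$ proof is the use of the dynamical, monotone structure of the SIS equation to identify the weak-* limit of the maximal equilibria.
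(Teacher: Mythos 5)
Your first half matches the paper's argument: extract weak-* limits of $\mathfrak{g}_{\eta_n}$ and $h_n=\eta_n\mathfrak{g}_{\eta_n}$, rewrite the equilibrium identity as $\mathfrak{g}_{\eta_n}=\Tinf_{\kk}(h_n)/(1+\Tinf_{\kk}(h_n))$ with $\kk=k/\gamma$, use the smoothing of the integral operator to upgrade to $\mu$-a.e.\ convergence of $\mathfrak{g}_{\eta_n}$ towards some equilibrium $g^\star$ of $F_\eta$, and conclude via the splitting of $\int\mathfrak{g}_{\eta_n}\eta_n\,\rd\mu$. (Two small inaccuracies: weak-* metrizability of $\Deltad$ requires separability of $L^1$, which is not assumed --- the paper invokes Lemma~\ref{lem:D-compact}~\ref{item:f-cont} instead; and only pointwise, not norm, convergence of $\Tinf_k(h_n)$ is available or needed.)

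The genuine gap is exactly at the step you flag as ``the main obstacle,'' namely $g^\star\geq\mathfrak{g}_\eta$, and the dynamical route you propose does not close it. Writing $\mathfrak{g}_{\eta_n}=\inf_{t}u_t^{\eta_n}$ as a decreasing limit of maps that are continuous in $\eta$ for fixed $t$ only shows that $\eta\mapsto\mathfrak{g}_\eta$ is \emph{upper} semicontinuous: for each fixed $t_0$ one has $\mathfrak{g}_{\eta_n}\leq u_{t_0}^{\eta_n}$, hence $\limsup_n\mathfrak{g}_{\eta_n}\leq u_{t_0}^{\eta}$ and, letting $t_0\to\infty$, $g^\star\leq\mathfrak{g}_\eta$ --- which is the inequality you already obtained from maximality. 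The reverse inequality would require the convergence $u_t^{\eta_n}\to\mathfrak{g}_{\eta_n}$ as $t\to\infty$ to be uniform in $n$ (Moore--Osgood), and no such uniformity is available; in the reducible case the relaxation to equilibrium can be arbitrarily slow, and a weak-* limit of maximal equilibria could \emph{a priori} be a strictly smaller equilibrium (e.g.\ supported on a proper invariant set). The paper avoids dynamics entirely and uses the static characterization of Lemma~\ref{lem:Fh>0}~\ref{cor:h=g}: $h=\mathfrak{g}_\eta$ if and only if $F_\eta(h)=0$ and $R_e[k\eta/\gamma](\un-h)\leq 1$. Each $\mathfrak{g}_{\eta_n}$ satisfies $R_e[k\eta_n/\gamma](\un-\mathfrak{g}_{\eta_n})\leq 1$, and since $\eta_n(\un-\mathfrak{g}_{\eta_n})\to\eta(\un-g^\star)$ weakly-*, the already-proved Theorem~\ref{th:continuity-R} passes this inequality to the limit, forcing $g^\star=\mathfrak{g}_\eta$. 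This spectral identification of the maximal equilibrium is the missing ingredient in your argument.
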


We shall write~$\I[k,\gamma]$ to stress the dependence of the
function~$\I$ in the kernel~$k$ and the function~$\gamma$. In
Proposition~\ref{prop:I-stab}, we prove the stability of~$\I$, by
giving natural sufficient conditions on a sequence of kernels and
functions $((k_n, \gamma_n), n\in \N)$ converging to~$(k, \gamma)$
which imply that~$\I[k_n, \gamma_n]$ converges uniformly
towards~$\I[k,\gamma]$.

\subsubsection{Other regularity results}
We also prove that the loss functions~$\loss=R_e$ and~$\loss=\I$ are
both non-decreasing ($\eta\leq \eta'$
implies~$\loss(\eta)\leq \loss(\eta')$), and sub-homogeneous
($ \loss(\lambda \eta)\leq \lambda \loss(\eta)$ for all
$\lambda\in [0,1]$); see Propositions~\ref{prop:R_e} and~\ref{prop:I}.

Motivated by the  bi-objective minimization problem of the  cost and the
loss  $\loss$  of vaccination  strategies  and  the description  of  the
corresponding set of Pareto  optimal vaccination strategies developed in
the  companion paper~\cite{ddz-theory-optim},  we shall  investigated if
local  extrema of  the loss  function are  in fact  global extrema,  see
Assumptions~\ref{hyp:loss-topo}     on     pages~\pageref{hyp:loss-topo}
and~\pageref{hyp:loss=I}.  It  turns out  that local minimum  are indeed
global  minimum for  the  loss  functions $R_e$  and  $\I$. However  the
picture is more  involved for the local maximum,  and slightly different
between $R_e$ and  $\I$. We concentrate in this paper  on the case where
the  model is  irreducible and  its extension,  the so  called monatomic
case,  where   intuitively,  there  is  only   one  maximal  irreducible
component.  Those results are  given in Lemmas~\ref{lem:L**} (for $R_e$)
and~\ref{lem:L**=I}  (for $\I$).   We also  characterize all  the global
maxima.  Let  us mention that the  reducible case for the  loss $R_e$ is
further studied in~\cite[Section~5]{ddz-cordon}.

\subsection{Structure of the paper}

After recalling a few topological facts in Section~\ref{sec:settings},
we present the vaccination model, the loss functions~$R_e$ and~$\I$,
and the various assumptions on the parameters in
Section~\ref{sec:settings}. We study the regularity properties
of~$R_e$ and~$\I$ in~Section~\ref{sec:properties-loss}.
Section~\ref{sec:other-reg} is devoted to study of their local
extremum.  The proofs of a few technical results on $\I$ are gathered
in Section~\ref{sec:technical_proofs}.

\section{General setting and notation}\label{sec:settings}

\subsection{Spaces, operators, spectra}\label{sec:spaces}

All metric spaces~$(S,d)$ are endowed with their Borel~$\sigma$-field
denoted by $\cb(S)$.  The set $\ck$ of compact subsets of~$\C$ endowed
with the Hausdorff distance $d_\mathrm{H}$ is a metric space, and the
function~$\mathrm{rad}$ from~$\ck$ to $\R_+$ defined
by~$\mathrm{rad}(K)=\max\{|\lambda|\, ,\, \lambda\in K\}$ is Lipschitz
continuous from~$(\ck,d_\mathrm{H})$ to~$\R$ endowed with its usual
Euclidean distance.

Let~$(\Omega, \cf)$ be a measurable space endowed with a
$\sigma$-finite non-negative measure $\mu\neq 0$.  We denote
by~$\mathscr{L}^\infty$, the Banach spaces of bounded real-valued
measurable functions defined on~$\Omega$ equipped with
the~$\sup$-norm, $\mathscr{L}^\infty_+$ the subset
of~$\mathscr{L}^\infty$ of non-negative function, and
$\Deltac=\{f\in \mathscr{L}^\infty\,\colon\, f(\Omega) \subset [0,
1]\}$ the subset of non-negative functions bounded by~$1$. For~$f$
and~$g$ real-valued functions defined on~$\Omega$, we may
write~$\langle f, g \rangle$ or $\int_\Omega f g \, \mathrm{d} \mu$
for $\int_\Omega f(x) g(x) \,\mu( \mathrm{d} x)$ whenever the latter
integral is meaningful.  For~$p \in [1, +\infty]$, we denote by
$L^p=L^p( \mu)=L^p(\Omega, \mu)$ the space of real-valued measurable
functions~$g$ defined~$\Omega$ such that
$\norm{g}_p=\left(\int |g|^p \, \mathrm{d} \mu\right)^{1/p}$ (with the
convention that~$\norm{g}_\infty$ is the~$\mu$-essential supremum of
$|g|$) is finite, where functions which agree~$\mu$-almost surely are
identified.  We denote by $\zero$ and $\un$ the elements of $\cl$
which are respectively the (class of equivalence of the) constant
functions equal to $0$ and to $1$, and with a slight abuse of
notation, we also see them as elements of $L^\infty $.  For
$f,g\in L^p$, the inequality $f\leq g$ (in $L ^p$) means that
$\mu(f>g)=0$.  We consider the Banach lattice
$(L^p, \norm{\cdot}_p, \leq )$ and its
cone~$L^p_+=\{f\in L^p \, \colon\, f\geq \zero\}$ of non-negative
functions from $L^p$.  We shall consider the set
$\Deltad=\{ f\in L^\infty \, \colon\, \zero \leq f \leq \un\}$
corresponding to the set $\Deltac$ where functions which agree
$\mu$-a.e.\ are identified.  For $g$ a measurable function, with a
slight abuse of notation, we denote by~$M_g$ the multiplication linear
map (possibly unbounded) on $L^p$ or on $\cl $ defined by~$M_g(h)=gh$.

\medskip

We now recall some general facts on Banach spaces and Banach lattices. 
  Let~$(E, \norm{\cdot})$ be a real  or complex Banach space.  We denote
  by~$\norm{\cdot}_E$ the operator norm  on~$\cll(E)$ the Banach algebra
  of operators,  that is, bounded  linear maps.  Let~$T\in  \cll(E)$. The spectral radius of~$T$ is
given by:
\begin{equation}\label{eq:def-rho-real}
  \rho(T)=
  \lim_{n\rightarrow \infty } \norm{T^n}_E^{1/n}.
\end{equation}
% We denote by $T^\top$ the adjoint of $T$.
A sequence~$(T_n,n \in \N)$ of
elements      of~$\cll(E)$      converges     strongly      to~$T $
if~$\lim_{n\rightarrow \infty } \norm{T_nx -Tx}=0$ for all~$x\in E$. The
operator        $T        $        is       compact        if        the
subset~$\{ T  x \,  \colon \,  \norm{x}\leq 1 \}$  of $E$  is relatively
compact;      and      following~\cite{anselone},     a      set      of
operators~$\ca\subset  \cll(E)$ is  \emph{collectively  compact} if  the
subset~$\{ S x \,  \colon \, S \in \ca, \, \norm{x}\leq 1  \}$ of $E$ is
relatively compact.

If~$(E, \norm{\cdot})$ is a complex Banach space, the
spectrum~$\spec(T)$ of~$T\in \cll(E)$ is the set of~$\lambda\in \C$
such that~$ T - \lambda \mathrm{Id}$ does not have a bounded inverse
linear map, where~$\mathrm{Id}$ is the identity operator
on~$E$. Recall that~$\spec(T)$ is a compact subset of~$\C$, and that
the spectral radius of~$T$ is also given by:
\begin{equation}\label{eq:def-rho}
  \rho(T)=\mathrm{rad}(\spec(T)).
\end{equation}
The  element $\lambda\in  \spec(T)$  is an  eigenvalue  if there  exists
$x\in E$ such that $Tx=\lambda x$ and $x\neq 0$.

The next result is in~\cite{anselone} (see \cite[Lemma~2.1]{ddz-Re} for
details). 
 
\begin{lemma}[Anselone]
  \label{lem:prop-spec-mult}
 Let~$(T_n, n\in \N)$ be a collectively compact sequence of $\cll(E)$ which
 converges strongly to~$T\in \cll(E)$. Then, we have $\lim_{n\rightarrow \infty }
 \spec(T_n)=\spec(T)$ in~$(\ck, d_\mathrm{H})$, and $\lim_{n\rightarrow }
 \rho(T_n)=\rho(T)$.
\end{lemma}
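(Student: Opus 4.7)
The plan is to reduce the second claim (convergence of spectral radii) to the first (Hausdorff convergence of spectra), since $\mathrm{rad}$ was already noted to be Lipschitz continuous from $(\ck,d_{\mathrm H})$ to $\R$ in Section~\ref{sec:spaces}. It therefore suffices to establish
\[
  d_{\mathrm H}(\spec(T_n),\spec(T)) \xrightarrow[n\to\infty]{} 0,
\]
which amounts to showing both upper and lower semicontinuity of the spectrum along the sequence. A preliminary observation I would record is that the strong limit $T$ of a collectively compact sequence is itself compact: if $(x_k)$ is bounded in $E$, then $(T_nx_k)_{n,k}$ lies in a common relatively compact set, and a diagonal extraction combined with the strong convergence $T_nx\to Tx$ produces a convergent subsequence of $(Tx_k)$. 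Consequently $\spec(T)$ is countable with $0$ as only possible accumulation point, and every nonzero element of $\spec(T)$ is an eigenvalue of finite multiplicity.

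For upper semicontinuity, fix $\lambda\notin\spec(T)$; I want to show $\lambda\notin\spec(T_n)$ for large $n$ (uniformly in compact neighborhoods of $\lambda$). Writing
\[
  \lambda I - T_n = (I - (T_n-T)R_\lambda)(\lambda I - T), \qquad R_\lambda := (\lambda I - T)^{-1},
\]
invertibility reduces to showing that $\|(T_n-T)R_\lambda\|_E \to 0$. Here pure strong convergence is not enough; the crucial point is that $R_\lambda$ sends the unit ball of $E$ into a bounded set whose image under $T$ is relatively compact, and collective compactness of $\{T_n\}$ implies that $\{(T_n-T)\}$ converges to $0$ \emph{uniformly} on every relatively compact subset of $E$. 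This is the heart of Anselone's argument. A standard perturbation then extends this to all $\lambda_n$ in a small disk around $\lambda$, so any accumulation point of $\spec(T_n)$ lies in $\spec(T)$.

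For lower semicontinuity, fix $\lambda \in \spec(T)$. If $\lambda = 0$ the statement is trivial when $0\in\spec(T_n)$ eventually, and otherwise one uses that the $T_n$ are compact on an infinite-dimensional space to ensure $0\in\spec(T_n)$ (or, in finite dimension, handles it directly). If $\lambda \neq 0$, compactness of $T$ gives an eigenvector $x$ with $Tx=\lambda x$ and $\|x\|=1$; then $T_n x \to \lambda x$ strongly, so $(T_n-\lambda I)x \to 0$. From here I would use the Anselone framework to promote this approximate eigenvalue relation to the existence of genuine $\lambda_n \in \spec(T_n)$ with $\lambda_n \to \lambda$: either via convergence of the spectral projections associated to a small circle around $\lambda$ (these are well-defined for $n$ large by upper semicontinuity and converge strongly to the projection at $\lambda$, which is nonzero), or by a direct contradiction argument showing that if $\spec(T_n)$ avoided a neighborhood of $\lambda$, then $(T_n-\lambda I)^{-1}$ would be uniformly bounded, contradicting $\|(T_n-\lambda I)x\| \to 0$ with $\|x\|=1$.

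The main obstacle is clearly the upper-semicontinuity step: strong convergence is genuinely too weak to control the resolvent in norm, and one cannot simply apply a Neumann-series perturbation. The key technical input is the fact that collective compactness upgrades strong convergence to \emph{uniform} convergence on relatively compact sets, and this is precisely what lets $(T_n-T)R_\lambda$ be small in operator norm even though $T_n-T$ is not. Once this is available, both directions of the Hausdorff convergence follow cleanly, and the conclusion on $\rho(T_n)$ is immediate from the Lipschitz property of $\mathrm{rad}$.
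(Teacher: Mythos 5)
The paper itself does not prove this lemma; it quotes it from Anselone's theory and refers to \cite[Lemma~2.1]{ddz-Re} for details, so your sketch must be measured against that standard argument. Your architecture is right: reduce $\rho$ to $\spec$ via the Lipschitz map $\mathrm{rad}$, then prove upper and lower semicontinuity of the spectrum, handling the lower bound with spectral projections. But the pivotal step of your upper-semicontinuity argument fails as stated. You factor $\lambda I - T_n = (I - (T_n - T)R_\lambda)(\lambda I - T)$ and reduce to $\norm{(T_n-T)R_\lambda}_E \to 0$. This is false in general: $R_\lambda$ maps the unit ball $B$ to a set that is bounded but \emph{not} relatively compact (an invertible operator is never compact in infinite dimension), and strong convergence plus collective compactness upgrades $T_n - T \to 0$ to uniform convergence only on relatively compact sets. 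Concretely, on $\ell^2$ take $T_n x = \langle x, e_n\rangle e_1$: the family is collectively compact (all images lie in the unit disk of $\C e_1$), $T_n \to T=0$ strongly, and for $\lambda = 1$ one has $R_1 = I$ yet $\norm{(T_n - T)R_1}_E = \norm{T_n}_E = 1$ for all $n$. (The lemma's conclusion still holds there, since each $T_n$ is nilpotent.) Your parenthetical justification — that $T(R_\lambda(B))$ is relatively compact — would control $(T_n-T)TR_\lambda$, but writing $R_\lambda = \lambda^{-1}(I + TR_\lambda)$ leaves the residual term $\lambda^{-1}(T_n - T)$, which does not vanish in operator norm: this is exactly the obstruction you yourself identify, and your factorization does not circumvent it.

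The repair is Anselone's actual device: put the \emph{approximating} operator on the right of the resolvent. One verifies the identity
\[
  (\lambda I - T_n)\cdot\tfrac{1}{\lambda}\bigl(I + R_\lambda T_n\bigr) \;=\; I + \tfrac{1}{\lambda}(T - T_n)R_\lambda T_n
\]
(and its mirror image with the error $\lambda^{-1}R_\lambda (T-T_n)T_n$). Now the error term is $(T - T_n)$ evaluated on $R_\lambda\bigl(\{T_n x : n\in\N,\ \norm{x}\le 1\}\bigr)$, the continuous image of a relatively compact set, hence relatively compact; therefore $\norm{(T-T_n)R_\lambda T_n}_E \to 0$ and a Neumann series gives invertibility of $\lambda I - T_n$ with resolvent bounds that are uniform for $\lambda$ in compact subsets of the complement of $\spec(T)\cup\{0\}$. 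With those uniform bounds, your lower-semicontinuity step should be run through the spectral-projection route you mention ($P_n \to P \neq 0$ forces $\spec(T_n)$ to meet the disk); the alternative ``direct contradiction'' you propose — that $\spec(T_n)$ avoiding a disk around $\lambda$ forces uniformly bounded resolvents at $\lambda$ — is not valid for non-normal operators without further input, so it should be dropped.
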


Let  $(E, \norm{\cdot},  \leq)$  be  a  real Banach  lattice, that is
$(E,    \norm{\cdot})$ is a real Banach space with an order relation
$\leq $ satisfying some conditions, see 
\cite[Section~9.1]{aliprantis}. 
We  denote by
$E_+  =  \{x  \in E  \,\colon\,  x  \geq  0  \}$ the  positive  cone  of
$E$.  Recall it is a closed
set.  A  linear map  $T$ on  $E$ is  \emph{positive} if
$T(E_+)  \subset  E_+$.   According  to~\cite[Theorem~4.3]{aliprantis}
positive  linear maps  on  Banach  lattices are  bounded  (and thus  are
operators).
If  $S$ and $T$ are  two operators on $E$,
we write  $T \leq S$ if the operator $S - T$ is positive.
The next result can be found in~\cite[Theorem 4.2]{marek70}. 
\begin{lemma}
  \label{lem:prop-spec-mult-2}
  Let~$(E, \norm{\cdot},  \leq)$  be  a  real Banach  lattice.
  Let $S,T\in\cll(E)$ be positive operators. If $T\leq S$, then we have:
\begin{equation}\label{eq:spec_rad_croissant}
\rho(T) \leq \rho(S).
\end{equation}
\end{lemma}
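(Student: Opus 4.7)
The plan is to deduce the spectral radius inequality from Gelfand's formula \eqref{eq:def-rho-real}, which expresses $\rho(U) = \lim_n \norm{U^n}_E^{1/n}$ for any operator $U$. Two ingredients are needed: first, that $0\leq T^n \leq S^n$ as operators for every $n\in\N$; second, that for positive operators on a Banach lattice the operator order is compatible with the norm, i.e.\ $0\leq A\leq B$ implies $\norm{A}_E \leq \norm{B}_E$. Granting both, one obtains $\norm{T^n}_E \leq \norm{S^n}_E$ for every~$n$, takes $n$th roots, and passes to the limit to conclude~\eqref{eq:spec_rad_croissant}.

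For the first ingredient I would argue by induction on~$n$. The composition of two positive operators is clearly positive (it sends $E_+$ into $E_+$). Assuming $T^n \leq S^n$, one writes
\[
S^{n+1} - T^{n+1} = S(S^n - T^n) + (S-T)T^n,
\]
and each summand is a product of two positive operators, hence positive, giving $T^{n+1}\leq S^{n+1}$ and closing the induction.

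For the second ingredient the key fact is that the operator norm of a positive operator~$U$ on a Banach lattice is attained on the positive cone, namely $\norm{U}_E = \sup\{\norm{Uy} \colon y\in E_+,\ \norm{y}\leq 1\}$. Once this is established, if $0\leq A\leq B$ and $y\in E_+$ with $\norm{y}\leq 1$, then $0\leq Ay\leq By$, and the Banach-lattice axiom of monotonicity of the norm on the positive cone yields $\norm{Ay}\leq \norm{By}\leq \norm{B}_E$; taking the supremum gives $\norm{A}_E \leq \norm{B}_E$. The main technical point, and the place where care is required, is the reduction of the supremum to the positive cone: for $x\in E$ with $\norm{x}\leq 1$ one uses the decomposition $x = x^+ - x^-$ with $x^\pm\in E_+$ and $\norm{|x|} = \norm{x}$ to get $|Ux| = |Ux^+ - Ux^-| \leq Ux^+ + Ux^- = U|x|$, and then monotonicity of the lattice norm to conclude $\norm{Ux}\leq \norm{U|x|}$. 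None of the individual steps is deep, but one must be careful to invoke only the Banach-lattice axioms rather than, say, the existence of a strong order unit or any separability hypothesis, since the lemma is to be applied to general~$L^p$ spaces.
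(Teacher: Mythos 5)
Your argument is correct. Note, however, that the paper does not actually prove this lemma: it simply cites \cite[Theorem~4.2]{marek70}, so there is no internal proof to compare against. What you have written is the standard self-contained derivation from the Banach-lattice axioms, and every step checks out: the telescoping identity $S^{n+1}-T^{n+1}=S(S^n-T^n)+(S-T)T^n$ does give $0\leq T^n\leq S^n$ by induction; the reduction of $\norm{U}_E$ to the positive cone via $|Ux|\leq U|x|$ and $\norm{\,|x|\,}=\norm{x}$ is exactly the right use of the lattice norm; and combining $\norm{T^n}_E\leq\norm{S^n}_E$ with Gelfand's formula~\eqref{eq:def-rho-real} (which is how the paper defines $\rho$ on a real Banach space, so no detour through the complexification is needed) yields~\eqref{eq:spec_rad_croissant}. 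The only thing your write-up buys beyond the paper's citation is self-containedness; conversely, the citation to Marek covers more general situations (e.g.\ order comparisons in ordered Banach spaces that are not lattices), which your argument, relying on the decomposition $x=x^+-x^-$ and the lattice norm, does not. As a proof of the lemma as stated, yours is complete.
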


Any real Banach lattice $E$ and any operator $T$ on $E$ admits a natural
complex  extension.  The  spectrum  of  $T$ will  be  identified as  the
spectrum of its complex extension and denoted by $\spec(T)$, furthermore
by \cite[Lemma  6.22]{abramovich02}, the spectral radius  of the complex
extension of $T$ is also equal to the spectral radius of $T$.  Moreover,
by \cite[Corollary 3.23]{abramovich02},  if $T$ is positive  (seen as an
operator  on the  real Banach  lattice $E$),  then $T$  and its  complex
extension have the same norm.

\subsection{On the weak-* topology on $\Deltad$}\label{sec:weak}

Let $\co$ denote the \emph{weak-* topology} on~$L^\infty $, that is,
the weakest topology on $L^\infty $ for which all the linear forms
$f\mapsto \int_\Omega fg \, \rd \mu$, $g\in L^1$, defined on
$L^\infty $ are continuous.  We recall that $(L^\infty , \co)$ is an
Hausdorff topological vector space,
see~\cite[Proposition~3.11]{brezis2010functional}, that the
topological dual of $(L^\infty , \co)$ is $L^1$,
see~\cite[Proposition~3.14]{brezis2010functional}, and that a
sequence~$(f_n, \, n \in \N)$ of elements of~$L^\infty $ converges
weakly-* to~$f\in L^\infty $ if and only if,
see~\cite[Proposition~3.13]{brezis2010functional}.

\begin{equation}\label{eq:weak-cv}
  \lim\limits_{n \to \infty} \int_\Omega g f_n \, \mathrm{d}\mu=
  \int_\Omega g f\, \mathrm{d}\mu \quad\text{for all~$g \in L^1 $.}
\end{equation}

A set  $A\subset L^\infty $  is weak-*  sequentially compact if  for all
sequences of elements of~$A$, there exists a sub-sequence which weakly-*
converges to a limit belonging to  $A$.  A topological set $(E, \co)$ is a
sequential space if  for any  $A\subset E$  which is  not closed,  there exist
$x\in \bar A \setminus  A$, where $\bar A$ is the closure  of $A$, and a
sequence in $A$ which converges to $x$.

\begin{lemma}[Topological properties of~$\Deltad\subset L^\infty $]\label{lem:D-compact}
 %%% LEMMA : BANACH ALAOGLU
Let   $(\Omega, \cf, \mu)$  be  a measured space with $\mu$ a
 $\sigma$-finite measure, and consider the weak-* topology on $L^\infty
 $. The following properties hold. 
 \begin{enumerate}[(i)]
 \item\label{item:D-compact} The set~$\Deltad$ is  weak-*  compact
 and weak-* sequentially compact.
% \begin{ajfnote}{}
   \item The set~$\Deltad$ endowed with the weak-* topology is a
     sequential space. 
\item\label{item:f-cont}  A function  from $\Deltad$  (endowed with  the
  weak-* topology) to  a topological space is continuous if  and only if
  it is sequentially continuous.
%\end{ajfnote}

 \end{enumerate}
\end{lemma}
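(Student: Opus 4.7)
The plan rests on the Banach--Alaoglu theorem---applicable because $\sigma$-finiteness of $\mu$ yields $L^\infty = (L^1)^*$---together with a reduction to countably generated sub-$\sigma$-algebras via conditional expectation, which is needed because $L^1(\Omega,\cf,\mu)$ may fail to be separable under a mere $\sigma$-finiteness assumption.

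For \ref{item:D-compact}, weak-* compactness of $\Deltad$ is immediate from Banach--Alaoglu applied to the closed unit ball of $L^\infty$, combined with the fact that $\Deltad$ is cut out by the weak-* closed conditions $\int fg \, \rd\mu \in [0, \int g\, \rd\mu]$ as $g$ ranges over $L^1_+$. For sequential compactness, given a sequence $(f_n) \subset \Deltad$, I would work inside the countably generated sub-$\sigma$-algebra $\cg \subset \cf$ generated by $(f_n)$ together with a countable exhaustion of $\Omega$ by sets of finite $\mu$-measure. Then $L^1(\Omega,\cg,\mu)$ is separable, and a Cantor diagonal extraction over a countable dense subset yields a subsequence $(f_{n_k})$ for which $\int f_{n_k} g \, \rd\mu$ converges for all $g \in L^1(\Omega,\cg,\mu)$. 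The identity $\int f_{n_k} g \, \rd\mu = \int f_{n_k}\, \E[g \mid \cg]\, \rd\mu$, valid since every $f_{n_k}$ is $\cg$-measurable, propagates this convergence to all $g \in L^1(\Omega,\cf,\mu)$. The resulting bounded linear functional on $L^1$ is represented by some $f \in L^\infty$, and weak-* closedness of $\Deltad$ forces $f \in \Deltad$.

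For (ii), I would show that $\Deltad$ is sequential by verifying that any weak-* sequentially closed $A \subset \Deltad$ is weak-* closed. Given $x$ in the weak-* closure of $A$, the plan is to construct iteratively a sequence $(a_n) \subset A$ and an increasing chain of countably generated sub-$\sigma$-algebras $\cg_n$ containing $x$, the previously chosen $a_m$ ($m < n$) and the exhaustion, in such a way that $a_n$ is weak-* close to $x$ against a growing finite portion of a countable dense subset of $L^1(\Omega, \cg_n, \mu)$. Passing to $\cg_\infty = \sigma\bigl(\bigcup_n \cg_n\bigr)$, which remains countably generated, reduces the situation to $L^\infty(\Omega, \cg_\infty, \mu)$, where the weak-* topology on $\Deltad$ is metrizable; the sequential compactness from (i) then extracts a subsequence that converges weak-* in $L^\infty$ to $x$, so that sequential closedness forces $x \in A$. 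Statement (iii) is then the standard topological fact that on a sequential space, continuity of a map is equivalent to sequential continuity, since preimages of closed sets are closed iff they are sequentially closed.

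The main obstacle is the sequential-compactness half of (i): without separability of $L^1(\Omega,\cf,\mu)$, the usual metrizability argument on the unit ball fails, and the conditional-expectation reduction is essential to produce an honest weak-* limit in $L^\infty$ (and not merely a limit against test functions in a separable subspace). Once this mechanism is in hand, (ii) and (iii) follow by routine diagonal and topological manipulations.
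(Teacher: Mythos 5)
Your proposal is correct, but it takes a genuinely different route from the paper. The paper handles everything in two strokes of functional-analytic machinery: after Banach--Alaoglu, it invokes the fact that $L^1(\mu)$ is weakly compactly generated when $\mu$ is $\sigma$-finite, and then the Amir--Lindenstrauss theorem, which yields at once that the unit ball of $L^\infty$ is weak-* sequentially compact and weak-* \emph{angelic}; angelicity immediately gives the Fréchet--Urysohn (hence sequential) property, and (iii) follows. You instead reprove this angelicity by hand in the concrete measure-theoretic setting: reduction to a countably generated sub-$\sigma$-algebra $\cg$ (enlarged by a finite-measure exhaustion so that $\mu|_\cg$ is $\sigma$-finite and $\E[\,\cdot\mid\cg]$ exists), separability of $L^1(\Omega,\cg,\mu)$, a diagonal extraction, and the identity $\int f g\,\rd\mu=\int f\,\E[g\mid\cg]\,\rd\mu$ for $\cg$-measurable $f$ to upgrade convergence against $L^1(\cg)$ to genuine weak-* convergence in $L^\infty(\Omega,\cf,\mu)$. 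The same mechanism, iterated with an increasing chain $\cg_n$ capturing $x$ and the previously chosen $a_m$, gives the Fréchet--Urysohn property directly. Your approach buys self-containedness and transparency (no appeal to WCG spaces or Amir--Lindenstrauss) at the cost of a fiddly but routine bookkeeping argument in (ii): one must arrange the enumeration so that every element of every dense set $D_m\subset L^1(\cg_m)$ is eventually tested with vanishing tolerance, and note that $\bigcup_m D_m$ is dense in $L^1(\cg_\infty)$. Both proofs are sound; the paper's is shorter, yours is more elementary.
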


\begin{proof}
  The Banach-Alaoglu theorem \cite[Theorem~3.21]{fhhspz} implies that
  the closed unit ball, say $B_{L^\infty} $, of $L^\infty $ is weak-*
  compact.  According to \cite[Example~(v), Chapter~11]{fhhspz} as
  $\mu$ is $\sigma$-finite, the Banach space $L^1$ is weakly compactly
  generated (that is, there exists a weakly compact set $K$ whose
  linear span is dense in $L^1$).  Thus, thanks to the
  Amir-Lindenstraus theorem, see Theorem~11.16 or more directly
  Exercise~11.21 in \cite{fhhspz}, the unit ball $B_{L^\infty} $ is
  weak-* sequentially compact and in fact weak-* angelic (that is, for
  all $A\subset B_{L^\infty} $ and all $x$ in the weak-* closure of
  $A$, there exists a sequence of elements in $A$ which weak-*
  converges to $x$, see \cite[Definition~4.48]{fhhspz}).  In
  particular, since $\Deltad$ is the closed ball centered at
  $2^{-1} \un$ with radius $1/2$ of~$L^\infty$, we get it is weak-*
  compact, weak-* sequentially compact and weak-* angelic.

 \medskip

 Since $\Deltad$ is weak-* angelic, we deduce that it is a sequential
 space.  Since continuity and sequential continuity coincide for
 functions defined on a sequential space, we get~\ref{item:f-cont}.
\end{proof}

\begin{remark}[On the topology on $\Deltad$]
   \label{rem:topo}
   Assume that the  measure $\mu$ is finite.  Let $p\in  (1, +\infty )$.
   Using that reflexive Banach spaces are weakly compactly
   generated according to~\cite[Example~(i), Chapter~11]{fhhspz}, we
   get, arguing as  in the  proof of  Lemma~\ref{lem:D-compact}, 
   that the  set $\Deltad$ with  the trace  of the weak-*  topology (and
   thus of  the weak topology  as the space is  reflexive) on $L^p  $ is
   also    a  sequential   space.
   Furthermore,  with  $1/p+1/q=1$,  a
   sequence~$(f_n, \, n \in \N)$  of elements of~$L^p $ converges weakly
   to~$f\in L^p $ if and only if:
\begin{equation}
  \label{eq:weak-cv2}
  \lim\limits_{n \to \infty} \int_\Omega  gf_n \, \mathrm{d}\mu= \int_\Omega 
  gf\, \mathrm{d}\mu
  \quad\text{for
    all~$g \in L^q $.}
\end{equation}
Since the topology on a sequential % $T_1$ Fr\'echet-Urysohn
space     is    characterized     by    the     converging    sequences,
see~\cite[Exercises~1.7.20]{engelking},   and   since~\eqref{eq:weak-cv}
and~\eqref{eq:weak-cv2}           are           equivalent           for
sequences~$(g_n, \, n \in \N)$  of elements of~$\Deltad$, we deduce that
the trace on~$\Deltad$ of the weak-*  topology on $L^\infty $ and of the
weak  topology on  $L^p$ coincide.  (Let us  stress that  there exists  a
topology  different  from  the  weak-*   topology  which  has  the  same
converging sequences, see the last proposition in~\cite{rubel}.)
\end{remark}

We shall consider loss functions $\loss$ defined on $\Deltac\subset \cl$, and see
them as function on $\Deltad\subset L^\infty $ when they are compatible
with the equivalence relation given by the $\mu$-a.e.\ equality. In this
case, with a slight abuse of notation, we also denote the corresponding
function on $\Deltad$ by $\loss$. 

\begin{definition}
   \label{defi:well-def}
A    loss
function $\loss$ defined   on $\Deltac$ is:
\begin{enumerate}[(i)]
   \item \textbf{Well defined}   (on $\Deltad$ endowed with the
weak-*   topology) if   for   all
$\eta_1,  \eta_2\in  \Deltac$:
\begin{equation}
  \label{eq:loss-defined}
  \eta_1=\eta_2 \quad \mu\text{-a.e.}\quad
   \implies\quad  \loss   (\eta_1)=\loss   (\eta_2);
\end{equation}

\item \textbf{Non-decreasing}   on $\Deltad$  if   for   all
$\eta_1,  \eta_2\in  \Deltac$:
\begin{equation}
  \label{eq:loss-increase}
  \eta_1\leq \eta_2 \quad \mu\text{-a.e.}\quad
   \implies\quad  \loss   (\eta_1)\leq \loss   (\eta_2);
\end{equation}
\item \textbf{Sub-homogeneous}   on $\Deltad$  if   for   all
$\eta\in  \Deltac$ and $\lambda\in [0, 1]$:
\begin{equation}
  \label{eq:sub-hom}
\loss(\lambda \eta) \leq  \lambda \, \loss(\eta). 
\end{equation}
\end{enumerate}
\end{definition}

 \section{The kernel and SIS models}
 
\subsection{Kernel model ($\mu(\Omega)\in (0, +\infty ]$)}
\emph{In the kernel model, we assume that the measure $\mu$ is $\sigma$-finite
and non-zero.}   We define a \emph{kernel}  (resp.\ \emph{signed kernel})
on~$\Omega$ as a $\R_+$-valued  (resp.\ $\R$-valued) measurable function
defined   on~$(\Omega^2,  \mathscr{F}^{\otimes   2})$.   For~$f,g$   two
non-negative measurable functions defined on~$\Omega$ and~$\kk$ a kernel
on~$\Omega$, we denote by $f\kk g$ the kernel on $\Omega$ defined by:
\begin{equation}
  \label{eq:def-fkg}
  f\kk g:(x,y)\mapsto f(x)\, \kk(x,y) g(y).
\end{equation}
When~$\gamma$ is a positive measurable function defined on~$\Omega$, we write~$\kk/\gamma$
for~$\kk\gamma^{-1}$, which  differs in general from~$\gamma^{-1} \kk$.

For~$p \in (1, +\infty )$, we define the double norm of a signed kernel~$\kk$ by:
\begin{equation}\label{eq:Lp-integ-cond}
  \norm{\kk}_{p,q}=\left(\int_\Omega\left( \int_\Omega \abs{\kk(x,y)}^q\,
  \mu(\mathrm{d}y)\right)^{p/q} \mu(\mathrm{d}x) \right)^{1/p}
  \quad\text{with~$q$ given by}\quad \frac{1}{p}+\frac{1}{q}=1.
\end{equation}

\begin{hyp}[On the kernel model \gxx]\label{hyp:k}
  %%%% HYPOTHESIS : KERNEL MODEL
  The kernel~$\kk$, defined on a measured space~$(\Omega,\cf,\mu)$, with
  $\sigma$-finite   non-zero   measure   $\mu$,   has   a   \emph{finite
    double-norm},      that      is,~$\norm{\kk}_{p,q}<+\infty$      for
  some~$p\in (1, +\infty )$.
\end{hyp}

To a kernel $\kk$ such that $\norm{\kk}_{p,q}<+\infty$, we associate the integral
operator~$T_\kk$ on~$L^p$ defined by:
\begin{equation}\label{eq:def-Tkk}
  T_\kk (g) (x) = \int_\Omega \kk(x,y) g(y)\,\mu(\mathrm{d}y)
  \quad \text{for } g\in L^p \text{ and } x\in \Omega.
\end{equation}
This operator is positive (in the sense that $T_\kk(L^p_+) \subset
L^p_+$), and compact (see~\cite[p. 293]{grobler}). It is well known and easy to
check that:
\begin{equation}\label{eq:double-norm-norm}
  \norm{ T_\kk }_{L^p}\leq \norm{\kk}_{p,q}.
\end{equation}
For~$\eta\in \Deltac$, the kernel~$\kk \eta$ has also a finite double norm on~$L^p$ and the
operator~$M_\eta$ is bounded, so that the operator $T_{\kk \eta} = T_\kk M_\eta$ is
compact. We can define the \emph{effective spectrum} function~$\spec[\kk]$ from~$\Deltac$
to~$\ck$ by:
\begin{equation}\label{eq:def-sigma_e}
  \spec[\kk](\eta)=\spec(T_{\kk\eta}),
\end{equation}
the \emph{effective reproduction number} function $R_e[\kk]=\mathrm{rad}\circ \spec[\kk]$
from~$\Deltac$ to~$\R_+$ by:
\begin{equation}
  \label{eq:def-R_e}
  R_e[\kk](\eta)=\rho(T_{\kk\eta}),
\end{equation}
and the corresponding \emph{reproduction number}:
\begin{equation}\label{eq:def-R0}
  R_0[\kk]=R_e[\kk](\un)=\rho(T_\kk).
\end{equation}
When  there is  no ambiguity,  we simply  write $R_e$  for the  function
$R_e[\kk]$,  and   $R_0$  for  the  number   $R_0[\kk]$.   Motivated  by
Section~\ref{sec:vacc}   below,   we    say   a   vaccination   strategy
$\eta\in \Deltac$ is \emph{critical} if $R_e(\eta)=1$.

\subsection{SIS model ($\mu(\Omega)=1$): dynamics and equilibria}
\label{sec:dyn}
\emph{In  the  SIS  model,  we  assume that  $\mu$  is  a  probability
  measure},         thus         following         the         framework
of~\cite{delmas_infinite-dimensional_2020}. For $q\in (1, +\infty )$, we
also consider the following norm for the kernel $\kk$:
\[
  \norm{\kk}_{\infty,q} = \sup\limits_{x \in \Omega} \left(\int_\Omega \kk(x,y)^q\,
  \mu(\mathrm{d}y) \right)^{1/q}.
\]
Since $\mu$ is finite, if the  norm $\norm{\kk}_{\infty , q}$ is finite,
then for $p$ such that  $1/p+1/q=1$, the norm~$\norm{\kk}_{p,q}$ is also
finite.   When $\norm{\kk}_{\infty  ,  q}<+\infty  $, the  corresponding
positive      bounded      linear     integral      operator~$\Tinf_\kk$
on~$\mathscr{L}^\infty$ is similarly defined by:
\begin{equation}
  \label{eq:def-Tk}
  \Tinf_{\kk} (g) (x) = \int_\Omega \kk(x,y) g(y)\,
  \mu(\mathrm{d}y)
  \quad \text{for } g\in \mathscr{L}^\infty \text{ and } x\in \Omega.
\end{equation}
Notice that the integral operators $\Tinf_\kk$ and $T_\kk$ corresponds respectively to the
operators $T_\kk$ and $\hat T_\kk$ in \cite{delmas_infinite-dimensional_2020}. According
to \cite[Lemma~3.7]{delmas_infinite-dimensional_2020}, the operator~$\Tinf_\kk^2$ on~$\cl$
is compact and~$\Tinf_\kk$ has the same spectral radius as~$ T_\kk$:
\begin{equation}\label{eq:rhoT=rhoT}
  \rho(\Tinf_\kk)=\rho(T_\kk).
\end{equation}

In accordance with  \cite{delmas_infinite-dimensional_2020}, we consider
the following assumption. Recall that $k/\gamma=k \gamma^{-1}$.

\begin{hyp}[On the SIS model \gxxx]\label{hyp:k-g}
  %%% HYPOTHESIS
 The recovery rate
  function~$\gamma$,   defined on a probability space
  $(\Omega,\cf,\mu)$,
  is bounded and positive. 
    The transmission rate kernel~$k$ on $\Omega$
  is such that   $\norm{k/\gamma}_{\infty , q}<+\infty $ for some
  $q\in (1, +\infty )$.
  \end{hyp}

  If $k$ and $\gamma$ satisfy Assumption~\ref{hyp:k-g}, then
  $\kk = k/\gamma$ clearly satisfies Assumption~\ref{hyp:k} (as $\mu$ is
  finite).  Under
  Assumption~\ref{hyp:k-g}, we also consider the bounded
  operators~$\Tinf_{k / \gamma}$ on $\cl$, as well as $T_{k / \gamma}$
  on $L^p$, which are the so called \emph{next-generation
    operator}.
  The SIS dynamics considered
  in~\cite{delmas_infinite-dimensional_2020} under Assumption
  \ref{hyp:k-g} follows the vector field~$F$ defined
  on~$\mathscr{L}^\infty$ by:
\begin{equation}\label{eq:vec-field}
  F(g) = (\un - g) \Tinf_k (g) - \gamma g.
\end{equation}
More precisely, we consider~$u=(u_t, t\in \R)$, where~$u_t\in \Deltac$ for all~$t\in\R_+$
such that:
\begin{equation}\label{eq:SIS2}
  \partial_t u_t = F(u_t)\quad\text{for } t\in \R_+,
\end{equation}
with initial condition~$u_0\in \Deltac$. The value~$u_t(x)$ models the probability that an
individual of feature~$x$ is infected at time~$t$; it is proved
in~\cite{delmas_infinite-dimensional_2020} that such a solution~$u$ exists and is unique.

\medskip

An \emph{equilibrium}  of~\eqref{eq:SIS2} is a function~$g  \in \Deltac$
such    that~$F(g)     =    \zero$    (in    $\cl$).     According    to
\cite{delmas_infinite-dimensional_2020},   there    exists   a   maximal
equilibrium~$\mathfrak{g}$, \textit{i.e.}, an  equilibrium such that all
other   equilibria~$h\in  \Deltac$   are  dominated   by~$\mathfrak{g}$:
$h \leq  \mathfrak{g}$. The \emph{reproduction  number}~$R_0$ associated
to the SIS model given by~\eqref{eq:SIS2}  is the spectral radius of the
next-generation operator, so that using  the definition of the effective
reproduction       number~\eqref{eq:def-R_e},~\eqref{eq:def-R0}      and
\eqref{eq:rhoT=rhoT}, this amounts to:
\begin{equation}\label{eq:def-R0-2}
  R_0= \rho (\Tinf_{k/\gamma})=R_0[k/\gamma]= R_e[k/\gamma](\un).
\end{equation}
If~$R_0\leq 1$  (sub-critical and  critical regime),  then~$u_t$ converges
pointwise   to~$\zero$  when~$t\to\infty$.    In  particular,   the  maximal
equilibrium~$\mathfrak{g}$  is  equal   to~$0$  everywhere.   If~$R_0>1$
(super-critical regime),  then the null  function is still  an equilibrium
but   different    from   the   maximal    equilibrium   $\mathfrak{g}$,
as~$\int_\Omega \mathfrak{g} \, \mathrm{d}\mu > 0$.

\subsection{Vaccination strategies in the SIS model}\label{sec:vacc}

A \emph{vaccination strategy}~$\eta$ of a vaccine with perfect
efficiency is an element of~$\Deltac$, where~$\eta(x)$ represents the
proportion of \emph{\textbf{non-vaccinated}} individuals with
feature~$x$. Notice that~$\eta\, \mathrm{d} \mu$ corresponds in a
sense to the effective population.

Recall the definition of the kernel~$f\kk g$
from~\eqref{eq:def-fkg}. For~$\eta \in \Deltac$, the
kernels~$k\eta/\gamma$ and~$k\eta$ have finite norm
$\norm{\cdot}_{\infty , q}$ under Assumption \ref{hyp:k-g}, so we can
consider the bounded positive operators~$\Tinf_{k \eta / \gamma}$
and~$\Tinf_{k\eta}$ on~$\mathscr{L}^\infty$. According to
\cite[Section~5.3.]{delmas_infinite-dimensional_2020}, the SIS
equation with vaccination strategy~$\eta$ is given by~\eqref{eq:SIS2},
where~$F$ is replaced by~$F_\eta$ defined by:
\begin{equation}
  \label{eq:vec-field-vaccin}
  F_\eta(g) = (\un - g) \Tinf_{k\eta}(g) - \gamma g.
\end{equation}
We denote by~$u^\eta=(u^\eta_t, t\geq 0)$ the corresponding solution
with initial condition~$u_0^\eta \in \Deltac$.  We recall
that~$u_t^\eta(x)$ represents the probability for an non-vaccinated
individual of feature~$x$ to be infected at time $t$.  Since the
effective reproduction number is the spectral radius
of~$\Tinf_{k\eta/\gamma}$, we recover~\eqref{eq:def-R_e} with
$\kk=k/\gamma$
as~$ \rho(\Tinf_{k\eta/\gamma})=\rho(T_{k\eta/\gamma} )=
R_e[k/\gamma](\eta)$.  We denote by~$\mathfrak{g}_\eta$ the
corresponding maximal equilibrium (so that
$\mathfrak{g}=\mathfrak{g}_{\un}$). In particular, we have:
\begin{equation}
  \label{eq:F(g)=0}
  F_\eta(\mathfrak{g}_\eta)=\zero \quad \text{(in $\cl$)}.
\end{equation}
We will denote by~$\I $ the \emph{fraction of infected individuals at
  equilibrium}. Since the probability for an individual with
feature~$x$ to be infected in the stationary regime
is~$\mathfrak{g}_\eta(x) \, \eta(x)$, this fraction is given by the
following formula:
\begin{equation}
  \label{eq:asymptotic_number_endemic}
  \I (\eta)=\int_\Omega \mathfrak{g}_\eta
  \, \eta\, \mathrm{d}\mu=\int_\Omega \mathfrak{g}_\eta(x)
  \, \eta(x)\, \mu(\mathrm{d}x).
\end{equation}
We deduce from~\eqref{eq:vec-field-vaccin} and~\eqref{eq:F(g)=0}
that~$\mathfrak{g}_\eta\eta=0$ $\mu$-almost surely is equivalent to~$\mathfrak{g}_\eta=0$.
Applying the results of~\cite{delmas_infinite-dimensional_2020} to the kernel~$k \eta$, we
deduce that:
\begin{equation}
  \label{eq:gh>0}
  \I (\eta)>0 \,\Longleftrightarrow\, R_e[k/\gamma](\eta)>1.
\end{equation}

\section{General properties of the functions
  \texorpdfstring{$R_e$}{Re}
  and 
  \texorpdfstring{$\I$}{I}}%
\label{sec:properties-loss}

As mentioned in the introduction, see \cite{ddz-theory-optim}, we shall see the
functions  $R_e$  and $\I$  defined  on  $\Deltac\subset\cl$ and  taking
values in $\R_+$  as loss functions, and check they  are well defined on
$\Deltad\subset L^\infty$,  see Definition~\ref{defi:well-def},  and then
non-decreasing and continuous on $\Deltad$.

\subsection{The effective reproduction number~\texorpdfstring{$R_e$}{Re}}

We  consider  the  kernel  model $[(\Omega,  \cf\!,  \mu),  \kk]$  under
Assumption~\ref{hyp:k},  so  that~$\mu$  is a  non-zero  $\sigma$-finite
measure  and   $\kk$  is  a   kernel  on~$\Omega$  with   finite  double
norm.  Recall  the  effective  reproduction  number  function~$R_e[\kk]$
defined                            on                           $\Deltac$
by~\eqref{eq:def-R_e}:~$R_e[\kk](\eta)=\rho(T_\kk   M_\eta)$,  and   the
reproduction  number~$R_0[\kk]=\rho(T_\kk)$. When  there is  no risk  of
confusion  on   the  kernel~$\kk$,  we  simply   write  $R_e$  and~$R_0$
for~$R_e[\kk]$ and~$R_0[\kk]$.

\begin{proposition}[Basic properties of $R_e$]\label{prop:R_e}
  %%% PROPOSITION : PROPRIETE DE R_E
  Suppose Assumption \ref{hyp:k} holds.  The
  function~$R_e=R_e[\kk]$ satisfies the following properties:
  \begin{enumerate}[(i)]
  \item\label{prop:a.s.+increase-Re}%
    The function $R_e$  is well  defined and non-decreasing  on $\Deltad$
    endowed with the weak-* topology.
   \item\label{prop:min_Re}%
    $R_e(\zero) = 0$ and~$R_e(\un) = R_0$.
   \item\label{prop:normal}%
    $R_e(\lambda \eta) = \lambda R_e(\eta)$ for all~$\eta\in \Deltad$
    and~$\lambda \in [0,1]$. 
  \end{enumerate}
\end{proposition}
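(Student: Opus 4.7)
The three assertions all follow formally from the representation $R_e(\eta)=\rho(T_\kk M_\eta)$ on the Banach lattice $L^p$, together with the general facts on positive operators recalled in Section~\ref{sec:spaces}. No part requires a genuinely new idea; the plan is simply to unfold the definitions and invoke Lemmas~\ref{lem:prop-spec-mult-2} and~\eqref{eq:def-rho-real}.

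For \ref{prop:min_Re}, I would just evaluate. Since $M_\zero$ is the zero map on $L^p$, one has $T_\kk M_\zero = 0$, hence $R_e(\zero) = \rho(0) = 0$. Since $M_\un = \mathrm{Id}_{L^p}$, one has $T_\kk M_\un = T_\kk$, so by \eqref{eq:def-R0} we get $R_e(\un) = \rho(T_\kk) = R_0$. For \ref{prop:normal}, note that $M_{\lambda \eta} = \lambda M_\eta$, so $T_\kk M_{\lambda \eta} = \lambda\,T_\kk M_\eta$; by the Gelfand formula \eqref{eq:def-rho-real}, the spectral radius is positively homogeneous, giving $R_e(\lambda \eta) = \lambda R_e(\eta)$ for $\lambda \in [0,1]$.

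The main content is in \ref{prop:a.s.+increase-Re}. For well-definedness, if $\eta_1 = \eta_2$ $\mu$-a.e., then $M_{\eta_1}$ and $M_{\eta_2}$ coincide as bounded operators on $L^p$ (since $L^p$ identifies $\mu$-a.e.\ equal functions), and hence so do the compositions $T_\kk M_{\eta_1}$ and $T_\kk M_{\eta_2}$; thus $R_e(\eta_1) = R_e(\eta_2)$, and property \eqref{eq:loss-defined} holds. For monotonicity, suppose $\eta_1 \leq \eta_2$ $\mu$-a.e. Then the kernel $\kk(\eta_2 - \eta_1)$ is $\mu \otimes \mu$-a.e.\ non-negative (using that $\kk \geq 0$), so the associated integral operator $T_{\kk \eta_2} - T_{\kk \eta_1} = T_{\kk(\eta_2 - \eta_1)}$ is a positive operator on the Banach lattice $L^p$, i.e.\ $T_{\kk \eta_1} \leq T_{\kk \eta_2}$. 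Lemma~\ref{lem:prop-spec-mult-2} then yields $R_e(\eta_1) = \rho(T_{\kk \eta_1}) \leq \rho(T_{\kk \eta_2}) = R_e(\eta_2)$.

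The only point to be a bit careful with is the distinction between $\Deltac \subset \cl$ and $\Deltad \subset L^\infty$: one must verify the $\mu$-a.e.\ compatibility before asserting anything about $\Deltad$. That verification is exactly what the first half of \ref{prop:a.s.+increase-Re} accomplishes, and once it is done the remaining statements transfer without change because every manipulation above takes place at the level of operators on $L^p$, where $\mu$-a.e.\ equal functions are already identified. I do not expect any genuine obstacle.
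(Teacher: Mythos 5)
Your proposal is correct and follows essentially the same route as the paper: well-definedness via $T_{\kk\eta_1}=T_{\kk\eta_2}$ for $\mu$-a.e.\ equal strategies, monotonicity via positivity of $T_{\kk\eta_2}-T_{\kk\eta_1}$ and Lemma~\ref{lem:prop-spec-mult-2}, and homogeneity via the Gelfand formula~\eqref{eq:def-rho-real}. The paper's own proof is just a more terse version of the same argument.
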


\begin{proof}
  If~$\eta_1=\eta_2$~$\mu$-almost surely, then we have
  that~$T_{\kk\eta_1} = T_{\kk\eta_2}$, and
  thus~$R_e(\eta_1)=R_e(\eta_2)$.  If~$\eta_1 \leq \eta_2$
  $\mu$-almost everywhere, then the
  operator~$T_{\kk \eta_2} - T_{\kk \eta_1 }$ is positive.  According
  to~\eqref{eq:spec_rad_croissant}, we get that
  $\rho(T_{\kk \eta_1 }) \leq \rho(T_{k \eta_2 })$.  This concludes
  the proof of Point~\ref{prop:a.s.+increase-Re}.
  Point~\ref{prop:min_Re} is a direct consequence of the definition
  of~$R_e$.  Since for any fixed~$\lambda\in\R_+$ and any operator~$T$
  on $L^p$, the norm of~$\lambda T$ is equal
  to~$\lambda\norm{T}_{L^p}$, Point~\ref{prop:normal} is clear.
\end{proof}

Similarly, we get that the function $\spec[\kk]$ defined on $\Deltac$ is
well defined on $\Deltad$. 
We generalize a continuity property on the spectral radius originally stated
in~\cite{delmas_infinite-dimensional_2020} by weakening the topology.

\begin{theorem}[Continuity of~$\grR$ and~$\grS$]\label{th:continuity-R}
  %%% THEOREM: CONTINUITY RE
  Suppose Assumption \ref{hyp:k} holds. Then, the functions~$\spec[\kk]$
  and~$R_e[\kk]$ are  continuous functions from $\Deltad$  (endowed with
  the weak-* topology) respectively to~$\ck$ (endowed with the Hausdorff
  distance) and to~$\R_+$ (endowed with the usual Euclidean distance).
\end{theorem}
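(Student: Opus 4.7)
Since $\mathrm{rad}\colon(\ck,d_\mathrm{H})\to\R_+$ is Lipschitz continuous and $R_e[\kk]=\mathrm{rad}\circ\spec[\kk]$, the continuity of $R_e[\kk]$ follows once that of $\spec[\kk]$ is established. So the plan is to prove continuity of $\spec[\kk]\colon\Deltad\to(\ck,d_\mathrm{H})$. By Lemma~\ref{lem:D-compact}\ref{item:f-cont}, since $\Deltad$ endowed with the weak-* topology is sequential, it is enough to show sequential continuity: if $(\eta_n, n\in\N)$ in $\Deltad$ weak-* converges to $\eta\in\Deltad$, then $\spec(T_{\kk\eta_n})\to\spec(T_{\kk\eta})$ in $(\ck,d_\mathrm{H})$.

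The main tool is Anselone's Lemma~\ref{lem:prop-spec-mult}, applied on $L^p$ (with $p$ given by Assumption~\ref{hyp:k}), which requires two ingredients: (a) strong convergence $T_{\kk\eta_n}\to T_{\kk\eta}$ in $\cll(L^p)$, and (b) collective compactness of the family $(T_{\kk\eta},\eta\in\Deltad)$.

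For (b), observe that for any $\eta\in\Deltad$ and $g\in L^p$ with $\norm{g}_p\leq 1$, we have $T_{\kk\eta}(g)=T_\kk(\eta g)$ with $\norm{\eta g}_p\leq 1$. Hence
\[
 \{T_{\kk\eta}g\,\colon\,\eta\in\Deltad,\ \norm{g}_p\leq 1\}\subset T_\kk(B_{L^p}),
\]
where $B_{L^p}$ is the closed unit ball of $L^p$. Since $T_\kk$ is compact by~\eqref{eq:double-norm-norm} and the remark below~\eqref{eq:def-Tkk}, the image $T_\kk(B_{L^p})$ is relatively compact, giving collective compactness.

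For (a), fix $g\in L^p$. By Hölder's inequality and Assumption~\ref{hyp:k}, for $\mu$-a.e.\ $x\in\Omega$ the function $y\mapsto \kk(x,y)g(y)$ lies in $L^1(\mu)$ (indeed, Fubini shows $\int\bigl(\int|\kk(x,y)g(y)|\,\mu(\mathrm{d}y)\bigr)^p\mu(\mathrm{d}x)\leq\norm{\kk}_{p,q}^p\norm{g}_p^p<\infty$). The weak-* convergence $\eta_n\to\eta$ then yields, for $\mu$-a.e.\ $x$,
\[
 T_{\kk\eta_n}(g)(x)=\int_\Omega \kk(x,y)g(y)\eta_n(y)\,\mu(\mathrm{d}y)
 \;\longrightarrow\;T_{\kk\eta}(g)(x).
\]
The domination $|T_{\kk\eta_n}(g)(x)-T_{\kk\eta}(g)(x)|\leq 2\,T_\kk(|g|)(x)$, with $T_\kk(|g|)\in L^p$, allows us to apply dominated convergence in $L^p$, giving $\norm{T_{\kk\eta_n}(g)-T_{\kk\eta}(g)}_p\to 0$, i.e.\ strong convergence in $\cll(L^p)$.

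With (a) and (b) in hand, Lemma~\ref{lem:prop-spec-mult} delivers $\spec(T_{\kk\eta_n})\to\spec(T_{\kk\eta})$ in Hausdorff distance and $\rho(T_{\kk\eta_n})\to\rho(T_{\kk\eta})$, which is exactly what we need. The main technical obstacle is step (a): the weak-* hypothesis is only against $L^1$ test functions, so one must justify carefully that $\kk(x,\cdot)g\in L^1(\mu)$ for $\mu$-a.e.\ $x$ (via Hölder and the finite double-norm) before invoking weak-* convergence pointwise, and then pass from pointwise a.e.\ convergence to $L^p$ convergence via a uniform $L^p$-dominant. The spectra of the real operators $T_{\kk\eta}$ are understood as those of their complex extensions, which have the same spectral radii and spectra by the general facts recalled at the end of Section~\ref{sec:spaces}.
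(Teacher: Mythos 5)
Your proof is correct and follows essentially the same route as the paper: collective compactness via $T_{\kk\eta}(B)\subset T_\kk(B)$, strong convergence via pointwise a.e.\ convergence (justified by H\"older and the finite double norm) plus an $L^p$ dominant, then Anselone's Lemma~\ref{lem:prop-spec-mult} and composition with $\mathrm{rad}$. The only cosmetic difference is your choice of dominant $2\,T_\kk(|g|)$ in place of the paper's $K(x)\,\norm{g}_p$ with $K(x)=\left(\int_\Omega \kk(x,y)^q\,\mu(\mathrm{d}y)\right)^{1/q}$; both lie in $L^p$ by Assumption~\ref{hyp:k}.
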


Let us remark the proof holds even if~$\kk$ takes negative values.

\begin{proof}
  Let~$B$ denote the unit ball in $L^p$, with $p\in (1, +\infty )$
  from Assumption \ref{hyp:k}. Since the operator~$T_\kk$ on $L^p$ is
  compact, the set~$T_{\kk}(B)$ is relatively compact.  For
  all~$\eta\in \Deltad$, set~$\eta B=\{\eta g\, \colon\, g\in
  B\}$. As~$\eta B\subset B$, we deduce
  that~$T_{\kk \eta}(B)= T_\kk (\eta B) \subset T_\kk (B)$. This
  implies that the
  family~$\{T_{\kk \eta} \, \colon\, \eta \in \Deltad\}$ is
  collectively compact.

  Let~$(\eta_n , \, n \in \N)$ be a sequence in~$\Deltad$ weak-*
  converging to some~$\eta \in \Deltad$. Let~$g \in L^p$. The weak-*
  convergence of~$\eta_n$ to~$\eta$ implies
  that~$(T_{\kk \eta_n}(g), \, n \in \N)$ converges~$\mu$-almost
  surely to~$T_{\kk \eta} (g)$. Consider the function $K$ defined on
  $\Omega$ by:
  \[
    K(x)=\left(\int _\Omega \kk(x,y)^q \, \mu(\mathrm{d} y) \right)^{1/q},
  \]
  which belongs to~$L^p$, thanks to~\eqref{eq:Lp-integ-cond}. Since for
  all~$x$,
  \[
    \abs{T_{\kk \eta_n}(g)(x)}
    \leq T_{\kk}( \abs {\eta_n g})(x)
    \leq K(x) \, \| \eta_n g\|_p \leq K(x)\, \norm{g}_p ,
  \]
  we deduce, by dominated convergence, that the convergence holds also in~$L^p$:
  \begin{equation}\label{eq:cv-Tkn}
    \lim_{n\rightarrow \infty } \norm{ T_{\kk \eta_n}(g) - T_{\kk \eta}(g)}_p=0,
  \end{equation}
  so  that $T_{\kk\eta_n}$  converges strongly  to $T_{\kk\eta}$.  Using
  Lemma~\ref{lem:prop-spec-mult}
  (with~$T_n=T_{\kk \eta_n}$ and~$T=T_{\kk \eta}$)  on the continuity of
  the                  spectrum,                 we                  get
  that~$\lim_{n\rightarrow\infty                                       }
  \spec[\kk](\eta_n)=\spec[\kk](\eta)$. The function~$\spec[\kk]$
  is  thus  weak-*  sequentially
  continuous, and, thanks to Lemma \ref{lem:D-compact}, it is continuous
  from~$\Deltad$  endowed  with  the   weak-*  topology  to  the  metric
  space~$\ck$  endowed  with  the  Hausdorff  distance.  The  continuity
  of~$R_e[\kk]$ then  follows from its  definition~\eqref{eq:def-rho} as
  the   composition    of   the    continuous   functions~$\mathrm{rad}$
  and~$\spec[\kk]$.
\end{proof}

We now give a stability property of the spectrum and
spectral radius with respect to the kernel~$\kk$.

\begin{proposition}[Stability of~$\grR$ and~$\grS$]\label{prop:Re-stab}
  %%%% PROPOSITION
  Let $\mu$ be a $\sigma$-finite non-zero  measure on the measurable space
  $(\Omega,  \cf)$.  Let~$p\in  (1, +\infty  )$. Let~$(\kk_n,  n\in \N)$
  and~$\kk$   be   kernels   on~$\Omega$  with   finite   double   norms
  on~$L^p$.  If~$\lim_{n\rightarrow\infty} \norm{\kk_n  - \kk}_{p,q}=0$,
  then we have:
  \begin{equation}
    \label{eq:Re-stab}
    \lim_{n\rightarrow\infty }\, \sup_{\eta\in \Deltac} \Big|R_e[\kk_n](\eta) -
    R_e[\kk](\eta)\Big|=0
    \quad\text{and}\quad
    \lim_{n\rightarrow\infty }\, \sup_{\eta\in \Deltac} d_\mathrm{H}\Big (
    \spec[\kk_n](\eta), \spec [\kk](\eta)\Big)=0.
  \end{equation}
\end{proposition}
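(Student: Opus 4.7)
The plan is to argue by contradiction, combining the already-established continuity of $R_e[\kk]$ (Theorem~\ref{th:continuity-R}) with Anselone's lemma (Lemma~\ref{lem:prop-spec-mult}) and the weak-* (sequential) compactness of $\Deltad$ (Lemma~\ref{lem:D-compact}). Both statements in~\eqref{eq:Re-stab} will be handled together: the Hausdorff convergence of $\spec[\kk_n](\eta)$ implies the scalar convergence of $R_e[\kk_n](\eta)$ via the Lipschitz map $\mathrm{rad}$.

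Suppose the uniform convergence fails: there exist $\varepsilon>0$, a subsequence (still indexed by $n$) and a sequence $(\eta_n,\,n\in\N)$ in $\Deltac$ such that
\[
d_\mathrm{H}\!\Big(\spec[\kk_n](\eta_n),\ \spec[\kk](\eta_n)\Big) \ \geq\ \varepsilon
\quad\text{for all } n.
\]
By Lemma~\ref{lem:D-compact}\ref{item:D-compact}, the class of $(\eta_n)$ admits a weak-* converging sub-sequence, say $\eta_n \to \eta^*$ in $\Deltad$ (relabelling). By Theorem~\ref{th:continuity-R} applied to the kernel $\kk$, one has $\spec[\kk](\eta_n)\to \spec[\kk](\eta^*)$ in $(\ck,d_\mathrm H)$. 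It therefore suffices to prove that $\spec[\kk_n](\eta_n)\to \spec[\kk](\eta^*)$ as well, which by Lemma~\ref{lem:prop-spec-mult} reduces to checking that the operators $A_n=T_{\kk_n\eta_n}$ on $L^p$ form a collectively compact family and converge strongly to $T_{\kk\eta^*}$.

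The key decomposition is
\[
A_n \;=\; T_{\kk\eta_n}\;+\;E_n,\qquad E_n = T_{(\kk_n-\kk)\eta_n}.
\]
The bound~\eqref{eq:double-norm-norm} together with $|\eta_n|\leq 1$ gives $\norm{E_n}_{L^p}\leq \norm{\kk_n-\kk}_{p,q}\to 0$, so the perturbation vanishes in operator norm. For the main term $T_{\kk\eta_n}$, the proof of Theorem~\ref{th:continuity-R} already establishes two facts: it is collectively compact (because $T_{\kk\eta_n}(B)\subset T_\kk(B)$, relatively compact), and it converges strongly to $T_{\kk\eta^*}$. Strong convergence of $A_n$ to $T_{\kk\eta^*}$ then follows by summing the two contributions. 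To obtain collective compactness of $(A_n)$, I would extract, from any sequence $(A_n g_n)$ with $\norm{g_n}_p\leq 1$, a convergent subsequence by first extracting from $(T_{\kk\eta_n}g_n)\subset T_\kk(B)$ and then controlling the $E_n g_n$ term by its norm tending to zero.

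Anselone's lemma then yields $\spec(A_n)\to \spec(T_{\kk\eta^*})$, contradicting the assumed lower bound once combined with the continuity of $\spec[\kk]$ at $\eta^*$. The second statement in~\eqref{eq:Re-stab} follows by applying the Lipschitz map $\mathrm{rad}$. The main obstacle is the passage from the qualitative spectral continuity available (Lemma~\ref{lem:prop-spec-mult} and Theorem~\ref{th:continuity-R}) to the uniform estimate in~\eqref{eq:Re-stab}: this is precisely what the compactness of $\Deltad$ together with the contradiction argument allows, and the verification of collective compactness of the perturbed family $(A_n)$ is the only non-formal step.
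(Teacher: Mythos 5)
Your proposal is correct and follows essentially the same route as the paper: reduce the uniform statement, via weak-* sequential compactness of $\Deltad$, to showing that $\spec[\kk_n](\eta_n)\to\spec[\kk](\eta)$ whenever $\eta_n\to\eta$ weak-*, and obtain that convergence from collective compactness plus strong convergence of $T_{\kk_n\eta_n}$ and Anselone's Lemma~\ref{lem:prop-spec-mult}. The only (harmless) difference is bookkeeping: the paper gets collective compactness of $\{T_{\kk_n}M_{\eta_n}\}$ by citing Anselone--Palmer (norm-compact families of compact operators are collectively compact, and composition with a bounded family preserves this), and uses an attained-supremum/triangle-inequality argument where you argue by contradiction; your direct verification via the decomposition $A_n=T_{\kk\eta_n}+E_n$ with $\norm{E_n}_{L^p}\le\norm{\kk_n-\kk}_{p,q}\to 0$ is a valid elementary substitute.
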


\begin{proof}
  %%%%% PROOF STABILITY
  Notice the suprema in~\eqref{eq:Re-stab} can also be taken over
  $\Deltad$ as $R_e$ and $\spec$ defined on $\Deltac$ are well defined
  on $\Deltad$.  Let us first prove that, if $(\eta_n, n\in \N)$ is a
  sequence in~$\Deltad$ which weak-* converges to~$\eta\in \Deltad$,
  then $ \spec[\kk_n](\eta_n)$ converges to $\spec[\kk](\eta)$ in
  Haussdorff distance.
  
  All the operators in~$\ca=\{T_\kk\} \cup \{T_{\kk_n}\, \colon\, n\in
  \N\}$
  are compact, and we deduce from~\eqref{eq:double-norm-norm} that:
  \[
    \lim_{n\rightarrow\infty}\norm{T_{\kk_n} -T_\kk}_{L^p}=0.
  \]
  Therefore $\ca$ is a compact set in~$\cll(L^p)$. Since the elements of
  $\ca$    are    compact    operators,    we    get    by~\cite[Theorem
  2.4]{anselone_palmer_collectively_compact}, that $\ca$ is collectively
  compact.
  Since $\{M_\eta\, \colon\, \eta\in \Deltad\}$ is a bounded set
  in~$\cll(L^p)$, we deduce from \cite[Proposition~4.2(2)]{anselone},
  that the
  family~$\ca'=\{T'M_\eta\, \colon, T'\in \ca \text{ and } \eta\in
  \Deltad\}$ is collectively compact.  A fortiori the
  sequence~$(T_n=T_{\kk_n \eta_n} =T_{\kk_n}M_{\eta_n}, n\in \N)$ of
  elements of~$\ca'$ is collectively compact, and
  $T=T_{\kk\eta}=T_\kk M_\eta$ is compact.

  Let~$g\in L^p$. We have:
  \[
    \norm{T_n(g) - T(g)}_p
    \leq \norm{T_{\kk_n}-T_\kk}_{L^p} \, \norm{ g}_p + \norm{T_{\kk\eta_n}(g) -
    T_{\kk\eta}(g)}_p.
  \]
  Using~$\lim_{n\rightarrow\infty    }\norm{T_{\kk_n}   -T_\kk}_{L^p}=0$
  and~\eqref{eq:cv-Tkn},                      we                     get
  that~$\lim_{n\rightarrow\infty    }     \norm{T_n(g)    -    T(g)}_p$,
  thus~$(T_n,   n\in  \N)$   converges  strongly   to  $T$.   Thanks  to
  Lemma~\ref{lem:prop-spec-mult},    we    deduce
  that~$\lim_{n\rightarrow  \infty   }  \spec(T_n)=\spec(T)$,   that  is
  $\lim_{n\rightarrow\infty            }           \spec[\kk_n](\eta_n)=
  \spec[\kk](\eta)$. \medskip

  Then,                              as                              the
  function~$\eta  \mapsto d_\mathrm{H}\Big  ( \spec[\kk_n](\eta),  \spec
  [\kk](\eta)\Big)$  is   weak-*  continuous   on  the   weak-*  compact
  set~$\Deltad$, thanks to Theorem~\ref{th:continuity-R}, it reaches its
  maximum  say at~$\eta_n\in  \Deltad$  for~$n\in  \N$. As~$\Deltad$  is
  weak-* sequentially   compact,  consider  a  sub-sequence which  weak-*
  converges to a limit say~$\eta$. Since
  \begin{multline*}
    \sup_{\eta\in \Deltad} d_\mathrm{H}\Big (
    \spec[\kk_n](\eta), \spec [\kk](\eta)\Big)\\
    \begin{aligned}
 & = d_\mathrm{H}\Big (
 \spec[\kk_n](\eta_n), \spec [\kk](\eta_n)\Big) \\
 & \leq d_\mathrm{H}\Big (
 \spec[\kk_n](\eta_n), \spec [\kk](\eta)\Big)
 + d_\mathrm{H}\Big (
 \spec[\kk](\eta_n), \spec [\kk](\eta)\Big),
    \end{aligned}
  \end{multline*}
  using the weak-* continuity of~$\spec[\kk]$,  we deduce that along this
  sub-sequence the  right hand  side converges to  0. Since  this result
  holds  for  any  converging  sub-sequence,  we  get  the  second  part
  of~\eqref{eq:Re-stab}.   The  first   part  then   follows  from   the
  definition~\eqref{eq:def-rho}  of  $R_e$  as a  composition,  and  the
  Lipschitz continuity of the function~$\mathrm{rad}$.
\end{proof}

\subsection{The asymptotic proportion of infected individuals \texorpdfstring{$\I $}{}}
   We consider  the SIS  model $[(\Omega, \cf,  \mu), k,  \gamma]$ under
   Assumption~\ref{hyp:k-g}.                                      Recall
   from~\eqref{eq:asymptotic_number_endemic}    that   the    asymptotic
   proportion  of infected  individuals~$\I $  is given  on $\Deltac$  by
   $\I  (\eta)=\int_\Omega \mathfrak{g}_\eta  \, \eta\,  \mathrm{d}\mu$,
   where $\mathfrak{g}_\eta$ is the  maximal solution in~$\Deltac$ of the
   equation  $F_\eta(h)  =  0$.   We first  recall  \cite[Lemma~5.3  and
   Proposition~5.5]{ddz-hit} on the properties and characterization of
   the maximal equilibrium $\mathfrak{g}=\mathfrak{g}_\un$.

\begin{lemma}[Properties of the maximal equilibrium]
  \label{lem:Fh>0}
  Suppose Assumption \ref{hyp:k-g} holds.
  \begin{enumerate}[(i)]
    \item  Let $\eta, g \in \Deltac$. If $F_\eta(g) \geq 0$, then we have $g\leq
  \mathfrak{g}_\eta$ (in $\cl$).
\item\label{cor:h=g} For any~$h\in\Deltac$, we have~$h=\mathfrak{g}$
  (in $\cl$) if and only if~$F(h) = 0$  (in $\cl$)
    and~$R_e(\un-h)\leq 1$.
  \item\label{cor:g>0}         If~$R_0>1$        (or         equivalently
    $\mathfrak{g}\neq \zero$  in $\cl$), then we have~$R_e(\un-\mathfrak{g}) = 1$.
  \end{enumerate}
\end{lemma}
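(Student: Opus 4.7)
For Part~(i), the natural approach is a comparison principle for the SIS semiflow. Under Assumption~\ref{hyp:k-g}, the vector field $F_\eta$ is quasi-monotone on $\Deltac$ (at a crossing point $g_1(x) = g_2(x)$ with $g_1 \leq g_2$, one has $F_\eta(g_1)(x) \leq F_\eta(g_2)(x)$), so the ODE $\partial_t u = F_\eta(u)$ preserves order, and starting from a sub-equilibrium~$g$ (\textit{i.e.}\ with $F_\eta(g) \geq 0$) yields a solution $t \mapsto u_t$ that is non-decreasing in~$t$. Since $u_t \in \Deltac$ is bounded above by $\un$, it converges pointwise to a limit $u_\infty \geq g$ which, by passing to the limit in the ODE, is an equilibrium. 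Maximality of $\mathfrak{g}_\eta$ then yields $u_\infty \leq \mathfrak{g}_\eta$, hence $g \leq \mathfrak{g}_\eta$.

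For Part~(ii), direction $(\Leftarrow)$: if $F(h) = \zero$ and $R_e(\un - h) \leq 1$, applying (i) with $\eta = \un$ and $g = h$ gives $h \leq \mathfrak{g}$. Setting $v = \mathfrak{g} - h \geq \zero$, subtracting $F(h) = F(\mathfrak{g}) = \zero$ and using the equilibrium identity $T_k(\mathfrak{g}) = \gamma\mathfrak{g}/(\un - \mathfrak{g})$ (noting $\mathfrak{g} < \un$ $\mu$-a.e., as any equilibrium with $h(x) = 1$ would force $F(h)(x) = -\gamma(x) < 0$) yields after rearrangement the algebraic identity
\[
(\un - h)\, T_k(v) = v\, \gamma/(\un - \mathfrak{g}).
\]
Substituting $\psi = v\gamma/(\un - h)$ transforms this into $T_{k(\un - h)/\gamma}(\psi) = \psi/(\un - \mathfrak{g})$, exhibiting $\psi$ as a non-negative super-eigenfunction of the positive compact operator $T_{k(\un - h)/\gamma}$ with super-eigenvalue $\geq 1$, with strict inequality wherever $\mathfrak{g} > 0$. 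Since $h \leq \mathfrak{g}$, the support of $v$ lies in $\{\mathfrak{g} > 0\}$; a Collatz--Wielandt / Krein--Rutman argument on this support then forces $R_e(\un - h) > 1$ whenever $v \not\equiv \zero$, contradicting the hypothesis. Hence $v \equiv \zero$, \textit{i.e.}\ $h = \mathfrak{g}$.

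For Part~(iii), the super-critical regime $R_0 > 1$ gives $\mathfrak{g} \neq \zero$ and $\mathfrak{g} < \un$ $\mu$-a.e. The explicit choice $\phi = T_k(\mathfrak{g}) = \gamma\mathfrak{g}/(\un - \mathfrak{g})$ satisfies $T_{k(\un-\mathfrak{g})/\gamma}(\phi) = T_k(\mathfrak{g}) = \phi$ by direct computation, so $1$ is an eigenvalue of the positive compact operator $T_{k(\un-\mathfrak{g})/\gamma}$ and $R_e(\un - \mathfrak{g}) \geq 1$. Conversely, if $R_e(\un - \mathfrak{g}) > 1$ held, one would build a perturbation $\mathfrak{g} + \varepsilon \tilde\phi$ that is a strict super-equilibrium for small $\varepsilon > 0$ (using a Perron eigenvector $\tilde\phi$ of the conjugate operator $M_{(\un - \mathfrak{g})/\gamma} T_k$, which has the same spectral radius), contradicting the maximality of $\mathfrak{g}$ via~(i). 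Combining both bounds gives $R_e(\un - \mathfrak{g}) = 1$; this also settles the direction $(\Rightarrow)$ of (ii) in the super-critical case, while the sub-critical case is trivial since $R_0 \leq 1$ forces $\mathfrak{g} = \zero$ and $R_e(\un) = R_0 \leq 1$.

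The main obstacle is the bound $R_e(\un - \mathfrak{g}) \leq 1$: the naive first-order expansion $DF(\mathfrak{g})(\tilde\phi) = \gamma \tilde\phi\bigl[\lambda - 1/(\un - \mathfrak{g})\bigr]$ along a Perron eigenvector of the linearisation changes sign across the level set $\{\mathfrak{g} = 1 - 1/\lambda\}$, so one cannot simply perturb $\mathfrak{g}$ in the direction of $\tilde\phi$. A refined construction is needed, \textit{e.g.}\ modifying $\tilde\phi$ by a suitable multiplicative correction (or localising via a cut-off to the region where the linearisation is favourable) in order to produce an honest positive super-equilibrium strictly above $\mathfrak{g}$, and thereby contradict~(i).
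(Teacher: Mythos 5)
First, a point of reference: the paper does not prove this lemma at all --- it is imported verbatim from \cite[Lemma~5.3 and Proposition~5.5]{ddz-hit} (ultimately resting on \cite{delmas_infinite-dimensional_2020}), so there is no in-paper argument to compare against. Judged on its own terms, your Part~(i) is correct and is the standard route: quasi-monotonicity of $F_\eta$, order preservation of the semiflow, monotone convergence from the sub-equilibrium $g$ to an equilibrium, and maximality of $\mathfrak{g}_\eta$. Your algebraic identity in Part~(ii) and the explicit eigenvector $\phi=T_k(\mathfrak{g})=\gamma\mathfrak{g}/(\un-\mathfrak{g})$ giving $R_e(\un-\mathfrak{g})\geq 1$ in Part~(iii) are both correct and verifiable.

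The two remaining steps are where the real content lies, and both are gaps. In Part~(ii)$(\Leftarrow)$, from $T_{k(\un-h)/\gamma}(\psi)=\psi/(\un-\mathfrak{g})$ with $\psi\geq\zero$, $\psi\neq\zero$, a Collatz--Wielandt bound only yields $\rho\geq \essinf_{\{\psi>0\}}(\un-\mathfrak{g})^{-1}$, and nothing forces $\mathfrak{g}$ to be bounded away from $\zero$ on $\{\psi>0\}$; so you only obtain $R_e(\un-h)\geq 1$, which is perfectly compatible with the hypothesis $R_e(\un-h)\leq 1$ and produces no contradiction. (In general, $T\psi\geq\psi$ with $T\psi\neq\psi$ does \emph{not} imply $\rho(T)>1$ for a positive compact operator --- think of a $2\times 2$ lower-triangular example with unit diagonal --- so the strictness must come from somewhere, and absent irreducibility it does not come from the pointwise strict inequality on $\{\mathfrak{g}>0\}$ alone.) In Part~(iii), you candidly leave the bound $R_e(\un-\mathfrak{g})\leq 1$ open, correctly diagnosing that the first-order perturbation $DF(\mathfrak{g})\tilde\phi=\gamma\tilde\phi\,(\lambda-(\un-\mathfrak{g})^{-1})$ changes sign; but this bound is the crux of the entire lemma, since Part~(ii)$(\Rightarrow)$ also rests on it. The route taken in the cited sources avoids eigenvalue perturbation altogether: one proves first that $R_0\leq 1$ forces convergence of the semiflow to $\zero$ (via a Lyapunov functional built from the left Perron eigenvector of the next-generation operator, with a separate treatment of the critical case $R_0=1$), and then reduces a general equilibrium $h$ to the zero equilibrium of a modified SIS system with next-generation operator $T_{k(\un-h)/\gamma}$ by the change of variables $u=h+(\un-h)w$; uniqueness of the limit of the semiflow started from $\un$ then identifies $h$ with $\mathfrak{g}$. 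As it stands, your proposal proves (i) and the easy inequalities, but not the characterization.
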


We may now state the main properties of the function $\I$.

\begin{proposition}[Basic properties of $\I $]\label{prop:I}
  %%% PROPOSITION
  Suppose that Assumption~\ref{hyp:k-g} holds and write $R_e$ for
  $R_e[k/\gamma]$.  The
  function~$\I $ has the following properties:
  \begin{enumerate}[(i)]
    \item\label{prop:a.s.+increase-I}%
    The function $\I$  is well  defined and non-decreasing  on $\Deltad$
    endowed with the weak-* topology.
  \item\label{prop:min-I} For $\eta\in \Deltad$, we have $\I (\eta)=0$ if
    and only if $R_e(\eta) \leq 1$.
   \item\label{prop:hom-I} $\I (\lambda \eta) \leq \lambda \I (\eta)$ for
    all~$\eta\in \Deltad$ and~$\lambda \in
    [0,1]$.
  \end{enumerate}
\end{proposition}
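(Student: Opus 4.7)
The plan is to reduce all three claims to a single comparison principle for the maximal equilibrium with respect to~$\eta$: namely, $\eta_1 \leq \eta_2$ $\mu$-a.e.\ should imply $\mathfrak{g}_{\eta_1} \leq \mathfrak{g}_{\eta_2}$. Once this monotonicity is in hand, (i), (ii) and (iii) fall out by short computations using Lemma~\ref{lem:Fh>0} and~\eqref{eq:gh>0}.

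For (i), I would first handle well-definedness. If $\eta_1=\eta_2$ $\mu$-a.e., then for every fixed $x\in\Omega$ and every $g\in\cl$ the integrals $\int k(x,y)\eta_i(y)g(y)\,\mu(\rd y)$ agree pointwise in $x$, so $\Tinf_{k\eta_1}(g)=\Tinf_{k\eta_2}(g)$ and therefore $F_{\eta_1}=F_{\eta_2}$ as maps $\Deltac\to\cl$. Their equilibrium sets in $\Deltac$ thus coincide, forcing $\mathfrak{g}_{\eta_1}=\mathfrak{g}_{\eta_2}$ and hence $\I(\eta_1)=\I(\eta_2)$. For monotonicity, suppose $\eta_1\leq\eta_2$ $\mu$-a.e. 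The non-negativity of $k$ gives $\Tinf_{k\eta_1}(g)\leq\Tinf_{k\eta_2}(g)$ pointwise for every non-negative $g$, and since $\un-g\geq\zero$ for $g\in\Deltac$ we obtain $F_{\eta_1}(g)\leq F_{\eta_2}(g)$. Evaluating at $g=\mathfrak{g}_{\eta_1}$ and using $F_{\eta_1}(\mathfrak{g}_{\eta_1})=\zero$ yields $F_{\eta_2}(\mathfrak{g}_{\eta_1})\geq\zero$, and Lemma~\ref{lem:Fh>0}\,(i) delivers $\mathfrak{g}_{\eta_1}\leq\mathfrak{g}_{\eta_2}$. Multiplying by $\eta_1\leq\eta_2$ (both non-negative) and integrating gives $\I(\eta_1)\leq\I(\eta_2)$.

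Point (ii) is essentially a restatement of the equivalence~\eqref{eq:gh>0} already recorded in Section~\ref{sec:vacc}: $\I(\eta)=0$ iff $\I(\eta)$ is not positive iff $R_e(\eta)\leq 1$.

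For (iii), I would fix $\lambda\in[0,1]$ and $\eta\in\Deltac$, apply the monotonicity from (i) to $\lambda\eta\leq\eta$ to get $\mathfrak{g}_{\lambda\eta}\leq\mathfrak{g}_\eta$, and then compute
\[
  \I(\lambda\eta)
  =\int_\Omega\mathfrak{g}_{\lambda\eta}\,(\lambda\eta)\,\rd\mu
  =\lambda\int_\Omega\mathfrak{g}_{\lambda\eta}\,\eta\,\rd\mu
  \leq\lambda\int_\Omega\mathfrak{g}_{\eta}\,\eta\,\rd\mu
  =\lambda\,\I(\eta).
\]
The only non-routine step is the comparison $\mathfrak{g}_{\eta_1}\leq\mathfrak{g}_{\eta_2}$ when $\eta_1\leq\eta_2$, which is the super-solution argument encoded in Lemma~\ref{lem:Fh>0}\,(i); beyond that I do not anticipate any real obstacle.
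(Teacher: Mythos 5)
Your proposal is correct and follows essentially the same route as the paper: well-definedness via equality of the operators $\Tinf_{k\eta_i}$, monotonicity of $\mathfrak{g}_\eta$ via the super-solution criterion $F_{\eta_2}(\mathfrak{g}_{\eta_1})\geq 0$ and Lemma~\ref{lem:Fh>0}\,(i), point (ii) as a restatement of~\eqref{eq:gh>0}, and sub-homogeneity by combining the comparison $\mathfrak{g}_{\lambda\eta}\leq\mathfrak{g}_\eta$ with pulling $\lambda$ out of the integral. The only difference is cosmetic: you make explicit the role of $\un-g\geq\zero$ in deducing $F_{\eta_1}(g)\leq F_{\eta_2}(g)$, which the paper leaves implicit.
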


\begin{proof}
  If $\eta_1=\eta_2$ $\mu$-almost surely, then the operators
  $\Tinf_{k\eta_1}$ and $\Tinf_{k\eta_2}$ are equal.  Thus, the
  equilibria $\mathfrak{g}_{\eta_1}$ and $\mathfrak{g}_{\eta_2}$ are
  also equal, which in turns implies that $\I (\eta_1)=\I (\eta_2)$.
  To prove the monotonicity, consider $\eta_1, \eta_2\in \Deltad$ such
  that a.s.\ $\eta_1\leq \eta_2$. This gives
  $\Tinf_{k\eta_1} \leq \Tinf_{k\eta_2}$. We deduce that
  $F_{\eta_1}(g)\leq F_{\eta_2}(g)$ in $\cl$ for all
  $g\in\Deltac\subset \cl$.  In particular, taking
  $g=\mathfrak{g}_{\eta_1}$ and using~\eqref{eq:F(g)=0}, we get
  $F_{\eta_2}(\mathfrak{g}_{\eta_1})\geq 0$.  By Lemma~\ref{lem:Fh>0}
  this implies $\mathfrak{g}_{\eta_1}\leq \mathfrak{g}_{\eta_2}$. To
  sum up, we get:
   \begin{equation}
    \label{eq:mon-gh}
    \eta_1 \leq \eta_2 \quad \text{in
     $L^\infty$}\quad \Longrightarrow \quad \mathfrak{g}_{\eta_1}\leq
    \mathfrak{g}_{\eta_2} \quad\text{in $\cl$.}
  \end{equation}
  This readily implies that $ \I (\eta_1) = \int_\Omega
  \mathfrak{g}_{\eta_1}\, \eta_1 \, \mathrm{d}\mu \leq \int_\Omega
  \mathfrak{g}_{\eta_2}\, \eta_2 \, \mathrm{d}\mu = \I (\eta_2)$.
This gives   Point~\ref{prop:a.s.+increase-I}.

  \medskip

  Point~\ref{prop:min-I} is already stated in Equation~\eqref{eq:gh>0}.
  We now consider Point~\ref{prop:hom-I}. Since $\lambda\in [0, 1]$, we deduce
  from~\eqref{eq:mon-gh} that $\mathfrak{g}_{\lambda \eta} \leq
  \mathfrak{g}_{\eta}$. This implies that $ \I (\lambda \eta) = \int_\Omega
  \mathfrak{g}_{\lambda \eta}\, \lambda \eta \, \mathrm{d}\mu \leq \lambda \int_\Omega
  \mathfrak{g}_{\eta}\, \eta \, \mathrm{d}\mu = \lambda \I (\eta)$.
\end{proof}

The proof of the following continuity results are both postponed to
Section~\ref{sec:technical_proofs}.

\begin{theorem}[Continuity of $\I$]\label{th:continuity-I}
  %%%% THEOREM CONTINUITY OF I
  Suppose that Assumption~\ref{hyp:k-g} holds. The function~$\I $ defined on~$\Deltad$ is
  continuous with respect to the weak-* topology.
\end{theorem}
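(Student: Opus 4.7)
My plan is to reduce to sequential continuity via Lemma~\ref{lem:D-compact}~\ref{item:f-cont}. Given a sequence $(\eta_n, n\in\N)$ in $\Deltad$ weak-* converging to $\eta\in\Deltad$, I set $\mathfrak{g}_n=\mathfrak{g}_{\eta_n}$ and $h_n=\eta_n\mathfrak{g}_n\in\Deltad$, so that $\I(\eta_n)=\int_\Omega h_n\,\mathrm{d}\mu$. Since $\un\in L^1$, it is enough to prove $h_n\to\eta\mathfrak{g}_\eta$ weak-*; using the weak-* sequential compactness of $\Deltad$ (Lemma~\ref{lem:D-compact}~\ref{item:D-compact}), this further reduces to showing that every weak-* subsequential limit $h^*\in\Deltad$ of $(h_n)$ equals $\eta\mathfrak{g}_\eta$.

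The central step is to pass to the limit in the equilibrium equation. Using $\Tinf_{k\eta_n}(\mathfrak{g}_n)=\Tinf_k(h_n)$, the identity $F_{\eta_n}(\mathfrak{g}_n)=\zero$ rearranges to the pointwise relation
\[
  \mathfrak{g}_n=\frac{\Tinf_k(h_n)}{\Tinf_k(h_n)+\gamma}.
\]
Under Assumption~\ref{hyp:k-g} one has $k(x,\cdot)\in L^1$ for every $x$, so the weak-* convergence $h_n\to h^*$ yields the pointwise convergence $\Tinf_k(h_n)\to \Tinf_k(h^*)$; since $\gamma$ is strictly positive, this forces $\mathfrak{g}_n\to g^*:=\Tinf_k(h^*)/(\Tinf_k(h^*)+\gamma)$ pointwise. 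Because $0\leq\mathfrak{g}_n\leq\un$ and $\mu$ is a probability, dominated convergence upgrades this to convergence in $L^1$. A standard splitting (exploiting $L^1$ convergence of $\mathfrak{g}_n$ against the weak-* convergence of $\eta_n$) then gives $h_n=\eta_n\mathfrak{g}_n\to \eta g^*$ weak-*, so $h^*=\eta g^*$. Plugging this back shows $F_\eta(g^*)=\zero$, i.e., $g^*$ is an $\eta$-vaccinated equilibrium, and maximality gives $g^*\leq\mathfrak{g}_\eta$.

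The final and most delicate point is the reverse inequality $g^*\geq\mathfrak{g}_\eta$. If $R_e[k/\gamma](\eta)\leq 1$ then $\mathfrak{g}_\eta=\zero$ and equality is automatic. Otherwise, Theorem~\ref{th:continuity-R} gives $R_e[k/\gamma](\eta_n)>1$ for all large $n$, so Lemma~\ref{lem:Fh>0}~\ref{cor:g>0} applied to the kernel $k\eta_n$ yields the criticality relation $R_e[k/\gamma]\bigl(\eta_n(\un-\mathfrak{g}_n)\bigr)=1$. Since $\eta_n(\un-\mathfrak{g}_n)=\eta_n-h_n$ weak-* converges to $\eta(\un-g^*)$, another application of Theorem~\ref{th:continuity-R} passes this to the limit: $R_e[k/\gamma]\bigl(\eta(\un-g^*)\bigr)=1$. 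Combined with $F_\eta(g^*)=\zero$, Lemma~\ref{lem:Fh>0}~\ref{cor:h=g} applied to the kernel $k\eta$ identifies $g^*=\mathfrak{g}_\eta$. Hence all subsequential weak-* limits of $(h_n)$ coincide with $\eta\mathfrak{g}_\eta$, the whole sequence converges, and $\I(\eta_n)\to\I(\eta)$.

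The main obstacle I expect is the nonlinearity of the equilibrium equation: weak-* limits do not commute with the quadratic term $(\un-\mathfrak{g}_n)\Tinf_{k\eta_n}(\mathfrak{g}_n)$. My way around this is to rewrite $\mathfrak{g}_n$ as an explicit \emph{pointwise} function of the \emph{linear} quantity $\Tinf_k(h_n)$, for which weak-* convergence does imply pointwise convergence; dominated convergence then upgrades everything to $L^1$ convergence, which is strong enough to handle the product $\eta_n\mathfrak{g}_n$. The secondary difficulty is that $F_\eta(g^*)=\zero$ alone only yields $g^*\leq\mathfrak{g}_\eta$; this is resolved by transporting the criticality identity of Lemma~\ref{lem:Fh>0}~\ref{cor:g>0} to the limit via the continuity of $R_e$ already established in Theorem~\ref{th:continuity-R}.
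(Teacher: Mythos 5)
Your argument is correct and follows essentially the same route as the paper's proof (Lemma~\ref{lem:cvgn2}~\ref{lem:cvg2-h}): extract a weak-* limit of $h_n=\eta_n\mathfrak{g}_{\eta_n}$, solve the equilibrium equation for $\mathfrak{g}_{\eta_n}$ as a pointwise function of the linear quantity $\Tinf_k(h_n)$ to upgrade weak-* convergence to a.e.\ and $L^1$ convergence, and identify the limit as $\mathfrak{g}_\eta$ via the characterization of Lemma~\ref{lem:Fh>0}~\ref{cor:h=g} together with the continuity of $R_e$ from Theorem~\ref{th:continuity-R}. The only (harmless) deviation is at the last step, where the paper passes the inequality $R_e[k\eta_n/\gamma](\un-\mathfrak{g}_{\eta_n})\leq 1$ (the ``only if'' part of Lemma~\ref{lem:Fh>0}~\ref{cor:h=g}) to the limit without any case distinction, whereas you use the equality of Lemma~\ref{lem:Fh>0}~\ref{cor:g>0} and therefore need the extra split on whether $R_e(\eta)>1$.
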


We write $\I[k, \gamma]$ for $\I$ to stress the dependence on the
parameters $k, \gamma$ of the SIS model.

\begin{proposition}[Stability of $\I$]\label{prop:I-stab}
  %%% PROPOSITION: STABILITY I
  Let $((k_n, \gamma_n), n\in \N)$ and $(k,\gamma)$ be a sequence of kernels and functions
  satisfying Assumption~\ref{hyp:k-g}. Assume furthermore that there exists $p'\in (1,
  +\infty )$ such that $\kk=\gamma^{-1} k$ and $(\kk_n=\gamma^{-1}_n k_n, n\in \N)$ have
  finite double norm in $L^{p'}$ and that $\lim_{n\rightarrow\infty } \norm{\kk_n
  -\kk}_{p',q'}=0$. Then we have:
  \begin{equation}\label{eq:I-stab}
    \lim_{n\rightarrow\infty }\, \sup_{\eta\in \Deltac} \Big|\I[k_n, \gamma_n](\eta) -
    \I[k,\gamma](\eta)\Big| = 0.
  \end{equation}
\end{proposition}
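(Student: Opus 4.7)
The plan follows the two-step structure of the proof of Proposition~\ref{prop:Re-stab}: first establish the pointwise limit along sequences where both the parameters and the vaccination strategy vary (a \emph{combined continuity} statement), then upgrade pointwise to uniform convergence using the weak-* compactness of $\Deltad$. Concretely, I would show that whenever $(\eta_n, n \in \N)$ is a sequence in $\Deltad$ weak-* converging to some $\eta \in \Deltad$, one has $\I[k_n,\gamma_n](\eta_n)\to \I[k,\gamma](\eta)$ under the hypotheses of the proposition. The uniform convergence~\eqref{eq:I-stab} then follows from a standard contradiction argument: if it failed, one could produce $\varepsilon>0$ and a sequence~$(\eta_n, n \in \N)$ with $|\I[k_n,\gamma_n](\eta_n)-\I[k,\gamma](\eta_n)|\geq \varepsilon$; extracting by Lemma~\ref{lem:D-compact} a weak-* convergent subsequence $\eta_n\rightharpoonup \eta$, and combining the combined continuity with the continuity $\I[k,\gamma](\eta_n)\to \I[k,\gamma](\eta)$ from Theorem~\ref{th:continuity-I}, yields a contradiction.

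The combined continuity is the heart of the argument. Let $\mathfrak{g}_n$ denote the maximal equilibrium associated with parameters $(k_n,\gamma_n,\eta_n)$, characterized by $(\un-\mathfrak{g}_n)\Tinf_{k_n\eta_n}(\mathfrak{g}_n)=\gamma_n\mathfrak{g}_n$ in $\cl$. Since $\mathfrak{g}_n\in \Deltad$ and $\Deltad$ is weak-* sequentially compact, I would extract a weak-* convergent subsequence $\mathfrak{g}_{n_k}\rightharpoonup g^\star$ and identify $g^\star$ with the maximal equilibrium $\mathfrak{g}_\eta$ of $F_\eta$ by checking the two conditions of Lemma~\ref{lem:Fh>0}\ref{cor:h=g} applied to the kernel $k\eta$: namely $F_\eta(g^\star)=\zero$, obtained by passing to the limit in the stationarity equation, and $R_e[k\eta/\gamma](\un-g^\star)\leq 1$, which follows from the stability of $R_e$ proved in Proposition~\ref{prop:Re-stab} (applied to the kernels $k_n\eta_n/\gamma_n$) together with the bound $R_e[k_n\eta_n/\gamma_n](\un-\mathfrak{g}_n)\leq 1$ from Lemma~\ref{lem:Fh>0}\ref{cor:g>0}. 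Uniqueness of the limit along any converging subsequence forces the full sequence $\mathfrak{g}_n$ to converge weak-* to $\mathfrak{g}_\eta$, after which a final passage to the limit in $\I[k_n,\gamma_n](\eta_n)=\int_\Omega \mathfrak{g}_n\,\eta_n\,\mathrm{d}\mu$ yields the claim.

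The main technical obstacle will be the passage to the limit in the nonlinear product $(\un-\mathfrak{g}_n)\Tinf_{k_n\eta_n}(\mathfrak{g}_n)$, since the weak-* limit of a product of two merely weak-* convergent sequences need not equal the product of their weak-* limits. As in the proof of Proposition~\ref{prop:Re-stab}, the crucial point is that the family $(T_{k_n\eta_n/\gamma_n}, n\in \N)$ is collectively compact on $L^{p'}$, so that $T_{k_n\eta_n/\gamma_n}(\mathfrak{g}_n)$ converges \emph{strongly} in $L^{p'}$ to $T_{k\eta/\gamma}(g^\star)$; this upgrade from weak-* to strong convergence is what allows the weak-$*$/strong pairing to pass to the limit, both in the stationarity equation and in the integral defining $\I$. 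Some additional care will be needed to handle $\gamma_n$ (bounded and positive by Assumption~\ref{hyp:k-g}, with suitable uniform control inherited along the sequence) when rewriting the product $\mathfrak{g}_n\eta_n$ using the equilibrium equation; these technicalities are presumably the reason the proof is deferred to Section~\ref{sec:technical_proofs}.
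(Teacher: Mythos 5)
Your overall architecture (a combined continuity statement in $(k_n,\gamma_n,\eta_n)$ followed by the extraction argument on the weak-* compact set $\Deltad$) is exactly the paper's, and you correctly identify the crux: the weak-* limit of a product of two weak-* convergent sequences need not be the product of the limits. But your proposed resolution does not close this gap. You extract a weak-* limit $g^\star$ of $\mathfrak{g}_n$ and claim that collective compactness forces $\Tinf_{k_n\eta_n/\gamma_n}(\mathfrak{g}_n)=\Tinf_{\kk_n}(\eta_n\mathfrak{g}_n)$ to converge strongly to $\Tinf_{\kk}(\eta g^\star)$. Collective compactness only yields relative compactness of this sequence in $L^{p'}$, i.e.\ the existence of strongly convergent subsequences; identifying the limit still requires knowing the weak-* limit of the product $\eta_n\mathfrak{g}_n$, which is precisely the unresolved point: a weak-* limit point $h$ of $(\eta_n\mathfrak{g}_n)$ need not equal $\eta g^\star$. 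The same obstruction undermines your verification of $R_e[k\eta/\gamma](\un-g^\star)\leq 1$: Proposition~\ref{prop:Re-stab} applied to the kernels $k_n\eta_n/\gamma_n$ would require $\norm{k_n\eta_n/\gamma_n-k\eta/\gamma}_{p',q'}\to 0$, which fails when $\eta_n$ converges only weak-*.

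The paper's proof (Lemma~\ref{lem:cvgn2}) escapes by changing the variable that is extracted: it takes a weak-* limit $h$ of the product $h_n=\eta_n\mathfrak{g}_n$ itself, and then solves the equilibrium equation \eqref{eq:F(g)=0} explicitly for $\mathfrak{g}_n$, namely $\mathfrak{g}_n=\Tinf_{\kk_n}(h_n)/(1+\Tinf_{\kk_n}(h_n))$. Since $\Tinf_{\kk_n}(h_n)\to\Tinf_\kk(h)$ almost surely (dominated convergence for the fixed kernel, plus the double-norm convergence of $\kk_n$ for the difference), this upgrades the convergence of $\mathfrak{g}_n$ to almost sure, hence strong, convergence to $g=\Tinf_\kk(h)/(1+\Tinf_\kk(h))$; only then is the product limit $h=\eta g$ legitimate (weak-* paired with strong), and $F_\eta[k,\gamma](g)=0$ follows. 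The maximality of $g$ is then obtained from Lemma~\ref{lem:Fh>0}~\ref{cor:h=g} after moving the factor into the argument via the identity $R_e[k'h']=R_e[h'k']$, so that the combined continuity/stability of $R_e[\kk_n]$ at the weak-* convergent arguments $\eta_n(\un-\mathfrak{g}_n)=\eta_n-h_n$ applies. If you want to keep your plan, you must incorporate this inversion of the equilibrium equation; as written, the identification of the limits in both steps of your argument is a genuine gap.
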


Let us stress that Assumption~\ref{hyp:k-g} on $k $ and $\gamma$ implies
that $k \gamma^{-1}$ has a finite double norm. In the proposition above,
it is also assumed that $\gamma^{-1} k$ has a finite double norm. Notice
those two conditions coincide when $\essinf_\Omega \gamma$ is positive.

\section{Other properties of the functions \texorpdfstring{$R_e$}{Re}
  and  
\texorpdfstring{$\I$}{I}}
\label{sec:other-reg}
In the companion paper \cite{ddz-theory-optim}  we consider the optimization of
the  protection  of  the  population,   which  can  be  written  as  the
bi-objective minimization problem $\min(C(\eta),\loss(\eta))$, where $C$
and $\loss$ stand  respectively for the cost and the  loss incurred when
following the  vaccination strategy~$\eta$. In  our setting the  loss is
given either by the effective  reproduction number $R_e$ or the fraction
of  infected individuals  at equilibrium  $\I$.   (To fix  the ideas,  a
natural cost $C$, when the measure  $\mu$ is finite, is the uniform cost
$  \costu (\eta)=\int_\Omega  (\un-\eta)\, \mathrm{d}\mu$  corresponding
intuitively  to the  number of  doses used  in the  vaccination strategy
$\eta$,  as  we  recall  that  $1-\eta(x)$  is  the  proportion  of  the
vaccinated  population  with  given   feature  $x$. Notice the cost
$\costu$ is well defined on $\Deltad\subset L^\infty $.)   The  bi-objective
minimization  problem  is  then  studied under  some  of  the  following
hypothesis on the loss.
Recall that  $\Deltad$ is  endowed with 
  the weak-* topology.

\begin{hyp}[On the loss]
  \label{hyp:loss-topo}
Let  $\loss$ be  a loss function from $\Deltad$ to $\R$.
  \begin{enumerate}[(i)]
  \item \textbf{Monotony.} \label{hyp:loss+cost}
    The  function $\loss$ is non-decreasing
  continuous  with $
    \loss(\zero)=0 $ and $   \max_\Deltad \, \loss>0$.
  \item \textbf{Minima.} \label{hyp:loss}
    Any local minimum of the  function $\loss$ is a  global minimum.

  \item \textbf{Maxima.} \label{hyp:loss**}
    Any local maximum of the  function $\loss$ is a  global maximum.
  \end{enumerate}
\end{hyp}

Notice the loss functions $R_e$ and $\I$ satisfy clearly
Assumption~\ref{hyp:loss-topo}~\ref{hyp:loss+cost} provided they are not
trivially equal to zero. Thanks to the next lemma, they also satisfy
Assumption~\ref{hyp:loss-topo}~\ref{hyp:loss} as they are
sub-homogeneous.

\begin{lemma}\label{lem:c-dec+L-hom}
  %%% LEMMA
  Let  $  \loss$ be  a  non-negative  and non-decreasing  loss  function
  defined   on   $\Deltad$.    If  it   is
  sub-homogeneous,  then  Assumption  \ref{hyp:loss-topo}~\ref{hyp:loss}
  holds.
\end{lemma}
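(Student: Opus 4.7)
The plan is to first identify the global minimum of $\loss$ as $0$, attained at $\zero$, and then use sub-homogeneity to force any candidate local minimum $\eta^*$ to take value $0$ as well, by sliding toward $\zero$ along the ray $\lambda\mapsto \lambda\eta^*$.

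First, I would establish that $\loss(\zero)=0$ is the global minimum value. Applying sub-homogeneity \eqref{eq:sub-hom} with $\lambda=0$ to any $\eta\in\Deltad$ yields $\loss(\zero)\leq 0$, and non-negativity then gives $\loss(\zero)=0$. Since $\zero\leq \eta$ in $\Deltad$ for every $\eta$, the monotonicity assumption yields $\loss(\eta)\geq \loss(\zero)=0$, so $\loss(\zero)$ is indeed the global minimum.

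Next, let $\eta^*\in\Deltad$ be a local minimum and let $U$ be a weak-* open neighborhood of $\eta^*$ on which $\loss\geq \loss(\eta^*)$. The rescaling map $\varphi\colon[0,1]\to\Deltad$, $\lambda\mapsto\lambda\eta^*$, is continuous when $\Deltad$ is endowed with the weak-* topology, because for every $g\in L^1$ the real-valued map $\lambda\mapsto\int_\Omega g(\lambda\eta^*)\,\mathrm{d}\mu=\lambda\int_\Omega g\eta^*\,\mathrm{d}\mu$ is linear, hence continuous in $\lambda$. Consequently $\varphi^{-1}(U)$ is an open subset of $[0,1]$ containing $1$, so it contains some interval $[\lambda_0,1]$ with $\lambda_0<1$. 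For any $\lambda\in[\lambda_0,1)$, the local-minimum condition combined with sub-homogeneity gives
\[
  \loss(\eta^*)\leq \loss(\lambda\eta^*)\leq \lambda\loss(\eta^*),
\]
so $(1-\lambda)\loss(\eta^*)\leq 0$, which forces $\loss(\eta^*)\leq 0$ and therefore $\loss(\eta^*)=0$ by non-negativity. Hence $\eta^*$ attains the global minimum and Assumption~\ref{hyp:loss-topo}~\ref{hyp:loss} holds.

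There is no substantial obstacle: the only ingredient beyond bookkeeping is the weak-* continuity of the rescaling map, which is transparent from the definition of the weak-* topology. The real mechanism is that sub-homogeneity makes the loss strictly decrease along the ray $\lambda\mapsto\lambda\eta^*$ whenever $\loss(\eta^*)>0$, and this ray reaches $\eta^*$ weakly-*, contradicting the local-minimum property unless $\loss(\eta^*)=0$.
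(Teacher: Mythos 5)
Your proof is correct and takes essentially the same approach as the paper: the paper's one-line argument is exactly the chain $\loss(\eta)\leq \loss((1-\varepsilon)\eta)\leq (1-\varepsilon)\loss(\eta)$ for small $\varepsilon$, forcing $\loss(\eta)=0$, which is the global minimum by non-negativity. Your extra verifications (weak-* continuity of $\lambda\mapsto\lambda\eta^*$, and the identification of $\loss(\zero)=0$ as the minimum value, the latter not strictly needed) are correct details that the paper leaves implicit.
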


\begin{proof}
  Let $\eta\in\Deltad$. If $\loss$ has
  a local minimum at $\eta$, then for $\varepsilon>0$ small enough $\loss(\eta) \leq
  \loss((1-\varepsilon) \eta) \leq (1-\varepsilon) \loss(\eta)$, so $\loss(\eta) = 0$ and
  $\eta$ is a global minimum of $\loss$.
\end{proof}

We prove in this section that under some irreducibility condition on the
kernel                that                $R_e$                satisfies
Assumption~\ref{hyp:loss-topo}~\ref{hyp:loss**}.  The situation is a bit
more     complicated     for     the     loss     $\I$,     for     which
Assumption~\ref{hyp:loss-topo}~\ref{hyp:loss**} does  not hold. However,
$\I$  satisfies  a  weakened  version,  see
Assumption~\ref{hyp:loss=I}~\ref{item:hyp:loss=I} 
below.
The reducible  case is more delicate  and it is studied  in more details
in~\cite[Section~5]{ddz-cordon}  for the  loss function  $\loss=R_e$; in
particular Assumption~\ref{hyp:loss-topo}~\ref{hyp:loss**}  may not hold
in this case.

In Section~\ref{sec:atom} we consider some irreducibility property for a
kernel and its implications for the SIS model, see
also~\cite{dlz-atom,dlz-equilibre} for further results in this
direction. In Section~\ref{sec:kernel-hyp},  we provide some
irreducibility conditions in the kernel model so that the loss function $R_e$
satisfies Assumption~\ref{hyp:loss-topo}, see Lemma~\ref{lem:L**}. 
Section~\ref{sec:SIS-hyp} provide similar results for the loss $\I$ in
the SIS model, see Lemma~\ref{lem:L**=I}.

\subsection{Irreducible, quasi-irreducible and monatomic kernels}\label{sec:atom}

We  follow the  presentation in~\cite[Section~5]{ddz-Re}  on the  atomic
decomposition of  positive compact operator  and Remark 5.2  therein for
the  particular case  of  integral operators,  see  also the  references
therein for  further results.   Let $(\Omega, \cf,  \mu)$ be  a measured
space with $\mu$ a non-zero $\sigma$-finite measure.  For $A, B\in \cf$,
we write  $A\subset B$ a.e.\ if  $\mu(B^c \cap A)=0$ and  $A=B$ a.e.\ if
$A\subset B$  a.e.\ and $B\subset  A$ a.e..   Let $\cg\subset \cf$  be a
$\sigma$-field. A  set $A$ is  an \emph{atom} of  $\mu$ in $\cg$  if $A$
belongs to  $\cg$, and  for all  $B\subset A$ with  $B\in \cg$,  we have
either $B=\emptyset$ a.e.\ or $B=A$ a.e.. Notice that the atoms are
defined up to an a.e.\ equivalence. 

Let $\kk$  be a
kernel on  $\Omega$ with a  finite double norm.  For   $A,  B\in  \cf$,
$x\in            \Omega$,           we            simply           write
$\kk(x,A)=\int_{        A}         \kk(x,y)\,        \mu(\rd        y)$,
$ \kk(B,x)=\int_{ B} \kk(z,x)\, \mu(\rd z)$ and:
\[
  \kk(B, A)=
  \int_{B \times A} \kk(z,y)\, \mu(\rd z) \mu(\rd y) \in [0, +\infty ].
\]
A   set  $A\in   \cf$  is   called  \emph{$\kk$-invariant},   or  simply
\emph{invariant}  when there  is no  ambiguity on  the kernel  $\kk$, if
$\kk(A^c,  A)=0$.   In  the  epidemiological setting,  the  set  $A$  is
invariant if the  sub-population $A$ does not  infect the sub-population
$A^c$.  The kernel $\kk$  is \emph{irreducible} (or \emph{connected}) if
any invariant set $A$ is such  that $\mu(A)=0$ or $\mu(A^c)=0$. If $\kk$
is irreducible, then either $R_0[\kk]>0$  or $k\equiv 0$ and $\Omega$ is
an  atom  of $\mu$  in  $\cf$  (degenerate  case). A  simple  sufficient
condition for irreducibility is for the kernel to be a.e.\ positive.

\medskip

Let $\ca$ be the set of $\kk$-invariant sets. Let us stress that the set
of $\kk$-invariant sets depends only on the support of the kernel $\kk$.
In particular in the SIS model, with $\kk=k/\gamma$, the $\kk$-invariant
sets and the  $k$-invariant sets coincide.  Notice that  $\ca$ is stable
by countable unions and countable intersections.  Let $\cfi=\sigma(\ca)$
be  the $\sigma$-field  generated by  $\ca$.  Then,  the operator  $\kk$
restricted to an atom of $\mu$  in $\cfi$ is irreducible.  We shall only
consider non degenerate atoms, and say  the atom (of $\mu$ in $\cfi$) is
non-zero if the restriction of the kernel $\kk$ to this atom is non-zero
(and thus the spectral radius  of the corresponding integral operator is
positive). 
We say  the kernel $\kk$  is \emph{monatomic}  if there exists  a unique
non-zero atom, say $\oa$, and  the kernel is \emph{quasi-irreducible} if
it is monatomic  and $\kk\equiv 0$ outside $\oa\times  \oa$, where $\oa$
is its  non-zero atom.  Notice  that: (i)  if $\kk$ is  irreducible with
$R_0[\kk]>0$, then  $\kk$ is monatomic with  non-zero atom $\oa=\Omega$;
(ii)  if  $\kk$ is  monatomic,  then  $R_0[\kk]>0$ by  definition.   The
quasi-irreducible  property is  the usual  extension of  the irreducible
property in the setting of symmetric kernels; and the monatomic property
is the natural generalization to non-symmetric kernels.
  
\medskip

According to~\cite[Lemma~5.3]{ddz-Re},  we get  that if a  kernel $\kk$,
with  finite double  norm, is  monatomic  with non-zero  atom $\oa$  and
$\eta\in      \Deltac$,      then,      with~$\ka=\ind{\oa}\kk\ind{\oa}$
and~$\bm{\kk}_\oaa$ (resp.\  $\bm{\eta}_\oaa$) the restriction  of $\kk$
(resp.\ $\eta\in \Deltac$) to $\oa$:
\begin{equation}\label{eq:decomp}
  R_e[\kk](\eta)=  R_e[\ka ](\eta)
  =  R_e[\ka ](\eta\ind{\oa})
  = R_e[\bm{\kk}_\oaa](\bm{\eta}_\oaa).
\end{equation}
 
\begin{remark}[Epidemiological interpretation]\label{rem:lien-epidemie}
  %%% REMARK
   When the  kernel $\kk=k/\gamma$ for   the    SIS   model
  $[(\Omega, \cf, \mu), k,  \gamma]$ is  monatomic, with
   non-zero atom $\oa$,  then the population with trait in  $\oa$ can infect
   itself. It may also infect another part of the population, say with
   trait in $\oi$, but:
   \begin{itemize}
   \item the infection cannot be sustained at all in $\oi$: $\kk$ is
     quasi-nilpotent on $\oi$;
     \item the population with trait in $\oi$ does not infect
       back the non-zero atom $\oa$.  
     \end{itemize}
      If  furthermore  $R_0>1$, then  the  set  $\oa\cup\oi$
   corresponds  to  the  support  of  the  maximal  endemic
   equilibrium.
\end{remark}

In the monatomic case, the  non-zero equilibrium, if it exists, is unique. 
This  result is a direct consequence of   Lemma~4.1~(iii) and Corollary~4.11 in
   \cite{dlz-equilibre}. 

\begin{lemma}[Equilibrium in the monatomic case]\label{lem:g>0=oa+oi}
  %%% LEMMA
  Assume    Assumption~\ref{hyp:k-g}   holds    for   the    SIS   model
  $\param=[(\Omega, \cf, \mu), k,  \gamma]$ and that $R_0[\kk]>1$ (super-critical regime) with $\kk=k/\gamma$. If $k$ (and $\kk$) is monatomic,
  with  non-zero atom  say $\oa$,  then there  exists a  unique non-zero
  equilibrium, say $\mathfrak{g}$, and its support is the smallest
  invariant set containing $\oa$, that is, the set  $\{\mathfrak{g}>0\}$ is
  invariant and if $A$ is invariant and $\oa \subset A$, then
  a.s. $\{\mathfrak{g}>0\} \subset A$. 
\end{lemma}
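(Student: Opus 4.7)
The strategy is to exploit the monatomic structure to decompose the problem: the non-trivial dynamics is confined to the atom $\oa$, where the restricted kernel is irreducible, while the rest of the support is obtained as a driven cascade through the quasi-nilpotent kernel on $\Omega\setminus\oa$. Existence of a non-zero equilibrium $\mathfrak{g}$ already follows from~\cite{delmas_infinite-dimensional_2020} combined with $R_0[\kk]>1$; the substance of the lemma is the uniqueness and the support identification.

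First, I would note that for any $h\in\Deltac$ satisfying $F(h)=\zero$, evaluating $(\un-h)\Tinf_k(h)=\gamma h$ on $\{h=0\}$ yields $\Tinf_k(h)=0$ there, hence $k(\{h=0\},\{h>0\})=0$, and the support of any equilibrium is $k$-invariant. If moreover $h$ is non-zero, then $h|_{\{h>0\}}$ is a positive equilibrium for the SIS dynamics restricted to the invariant set $\{h>0\}$, which forces its $R_0$ to exceed $1$; by monatomicity (cf.~Remark~\ref{rem:lien-epidemie}), any invariant subset of $\Omega\setminus\oa$ has a quasi-nilpotent restricted kernel, so $\{h>0\}$ must contain $\oa$ up to a $\mu$-null set.

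Next, on the invariant set $\{h>0\}$ the restriction of $\kk$ remains monatomic with non-zero atom $\oa$, and the restriction of $\kk$ to $\oa$ is irreducible with spectral radius $R_0[\kk]>1$ thanks to~\eqref{eq:decomp}. A Krein-Rutman / Lajmanovich-Yorke argument for irreducible positive kernels in the super-critical regime yields a unique $\mu$-a.e.\ positive equilibrium $\bm{\mathfrak{g}}_\oaa$ on $\oa$, which pins down $h|_{\oa}$. The remaining part $h|_{\{h>0\}\setminus\oa}$ is then reconstructed by a monotone cascade driven by $h|_{\oa}$ through the quasi-nilpotent kernel: the associated fixed-point iteration, sandwiched between $\zero$ and the maximal equilibrium $\mathfrak{g}$, converges to a unique solution by quasi-nilpotency. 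Thus every non-zero equilibrium coincides with $\mathfrak{g}$, which proves uniqueness.

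For the support characterization, $\{\mathfrak{g}>0\}$ is invariant and contains $\oa$ by the above. Conversely, given any invariant $A\supset\oa$, the same construction applied inside $A$ (which carries a self-contained SIS model, still monatomic with atom $\oa$ and $R_0>1$) yields an equilibrium supported in $A$; extending by zero on $A^c$ gives a global equilibrium, since the invariance of $A$ guarantees $F$ vanishes on $A^c$. By the uniqueness just established this extension must equal $\mathfrak{g}$, so $\{\mathfrak{g}>0\}\subset A$. The main obstacle is the cascade uniqueness on $\{h>0\}\setminus\oa$, which requires a careful argument based on the quasi-nilpotency of $\kk$ outside $\oa$; this is precisely what Lemma~4.1~(iii) and Corollary~4.11 in~\cite{dlz-equilibre} establish in full generality.
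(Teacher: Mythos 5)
The paper does not actually prove this lemma: it is stated as ``a direct consequence of Lemma~4.1~(iii) and Corollary~4.11 in \cite{dlz-equilibre}'', so the only ``official'' proof is that citation. Your proposal is therefore more detailed than the paper's, and its overall architecture (irreducible core on $\oa$ pinned down by a Lajmanovich--Yorke argument, plus a quasi-nilpotent cascade outside $\oa$, plus extension-by-zero on invariant sets to get the support characterization) is the right one and matches the structure of the cited results. Several of your intermediate steps are correct and complete as stated: the support of any equilibrium is invariant (from $\Tinf_k(h)=0$ on $\{h=0\}$); extension by zero of an equilibrium of the model restricted to an invariant $A$ is a global equilibrium; and the support characterization follows from uniqueness.

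There is, however, one step you gloss over that is not free, and it sits exactly where you claim the irreducible system ``pins down $h|_{\oa}$''. For $x\in\oa$ the equilibrium equation reads $\gamma(x)h(x)=(1-h(x))\int_{\{h>0\}}k(x,y)h(y)\,\mu(\rd y)$, and the integral decouples into a self-contained equation on $\oa$ only if $k(x,y)h(y)=0$ for a.e.\ $x\in\oa$, $y\notin\oa$, i.e.\ only if the part of the support outside $\oa$ does not infect $\oa$. Invariance of $\{h>0\}$ does \emph{not} give this: invariance controls $k(A^c,A)$, not $k(\oa, A\setminus\oa)$, and a two-point example with $k(2,1)>0$, $k(1,2)=k(1,1)=0$, $k(2,2)>0$ shows that a zero atom inside an invariant set containing $\oa$ can perfectly well infect $\oa$. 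What saves the argument is that any such ``upstream'' part must carry $h=0$ at equilibrium, equivalently that $\{h>0\}$ is contained in the \emph{smallest} invariant set containing $\oa$ for \emph{every} non-zero equilibrium $h$ --- but that containment is itself one of the assertions being proved, so you cannot invoke it implicitly before establishing uniqueness. A non-circular ordering must first prove $\{h>0\}\subset C$ (e.g.\ via the ``no sources'' observation that $h(x)>0$ forces $\Tinf_k(h)(x)>0$, combined with quasi-nilpotency off the atom), then decouple $\oa$, then run the cascade. This, together with the cascade uniqueness you already flag, is precisely the content delegated to \cite{dlz-equilibre} and to the atomic decomposition of \cite{ddz-Re}; since the paper itself delegates everything there, your sketch is acceptable as a roadmap, but it should not present the decoupling on $\oa$ as automatic.
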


\subsection{The kernel model}
\label{sec:kernel-hyp}
We now check Assumptions~\ref{hyp:loss-topo}~\ref{hyp:loss}-\ref{hyp:loss**} for the
loss $\loss=R_e$.

\begin{lemma}[Extrema of $R_e$]
  \label{lem:L**}
  %% LEMMA
  Consider  the kernel  model $\param=[(\Omega,  \cf, \mu),  \kk]$ under
  Assumption~\ref{hyp:k}, and simply write $R_e$ for the loss function $\loss=R_e[\kk]$.
  % and $\maxloss=R_0[\kk]$. 

  \begin{enumerate}[(i)]
  \item   Assumption~\ref{hyp:loss-topo}~\ref{hyp:loss+cost}  holds   if
    $R_0>0$.
\item   Assumption~\ref{hyp:loss-topo}~\ref{hyp:loss}   holds.

\item\label{it:Re-mono-H3}  If $\kk$ is monatomic with atom $\oa$, then $R_0>0$ and Assumption
  \ref{hyp:loss-topo}~\ref{hyp:loss**} holds. Furthermore, % we have that
  $\eta\in  \Deltad$  is a  global  maximum  of  $R_e$  if and  only  if
  $\eta\geq \ind{\oa}$ (in $L^\infty $).
\end{enumerate}
\end{lemma}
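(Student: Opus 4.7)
The plan is to handle the three items in turn; items~(i) and~(ii) will follow immediately from results already in hand, while item~(iii) will rest on a perturbation argument and the strict monotonicity of the spectral radius for irreducible positive compact operators.

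First I would dispose of items~(i) and~(ii). For~(i), $R_e$ is continuous by Theorem~\ref{th:continuity-R}, non-decreasing with $R_e(\zero)=0$ by Proposition~\ref{prop:R_e}, and has maximum $R_e(\un)=R_0$; assuming $R_0>0$, Assumption~\ref{hyp:loss-topo}~\ref{hyp:loss+cost} is satisfied. For~(ii), the identity $R_e(\lambda\eta)=\lambda R_e(\eta)$ from Proposition~\ref{prop:R_e}~\ref{prop:normal} makes $R_e$ sub-homogeneous, so Lemma~\ref{lem:c-dec+L-hom} directly gives Assumption~\ref{hyp:loss-topo}~\ref{hyp:loss}.

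For item~(iii), I would proceed in two stages. The monatomic assumption gives $\rho(T_{\bm{\kk}_\oaa})>0$ by definition, and plugging $\eta=\un$ into the decomposition~\eqref{eq:decomp} yields $R_0=R_e[\bm{\kk}_\oaa](\un_\oaa)>0$. To characterize the global maxima, I would apply~\eqref{eq:decomp} in both directions: if $\eta\geq\ind{\oa}$, then $\bm{\eta}_\oaa=\un_\oaa$ so $R_e(\eta)=R_0$; conversely, if $R_e(\eta)=R_0$, then $R_e[\bm{\kk}_\oaa](\bm{\eta}_\oaa)=R_e[\bm{\kk}_\oaa](\un_\oaa)$, and the strict monotonicity of the spectral radius for the irreducible positive compact operator $T_{\bm{\kk}_\oaa}$ forces $\bm{\eta}_\oaa=\un_\oaa$ $\mu$-a.e., i.e.\ $\eta\geq\ind{\oa}$ in $L^\infty$.

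To show that a local maximum is global I would argue by contradiction: supposing $\eta^*$ is a local maximum with $R_e(\eta^*)<R_0$, the characterization just proved gives a set $B\subset\oa$ with $\mu(B)>0$ on which $\eta^*<1-\delta$ for some $\delta>0$. The perturbation $\eta_\epsilon=\eta^*+\epsilon\ind{B}$ then lies in $\Deltad$ for $\epsilon\in(0,\delta)$ and converges to $\eta^*$ in $L^\infty$-norm, hence in the weak-* topology, so it eventually enters every weak-* neighborhood of $\eta^*$. Strict monotonicity applied to $T_{\bm{\kk}_\oaa}$ then yields $R_e(\eta_\epsilon)>R_e(\eta^*)$, contradicting the local maximality. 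The main obstacle I expect is this strict monotonicity statement for positive perturbations of irreducible positive compact operators; it is a standard consequence of Krein--Rutman theory, and is the only place where the irreducibility of $\bm{\kk}_\oaa$ is essentially used beyond the decomposition~\eqref{eq:decomp}.
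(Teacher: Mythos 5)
Items~(i) and~(ii) are correct and follow the paper verbatim (continuity, monotonicity, and Lemma~\ref{lem:c-dec+L-hom} applied to the homogeneity of $R_e$). Your characterization of the global maxima of $R_e$ in item~(iii) via strict monotonicity of the spectral radius (if $0\leq S\leq T$ with $T$ irreducible, compact, positive, $\rho(T)>0$ and $S\neq T$, then $\rho(S)<\rho(T)$) is a legitimate alternative route: the paper instead obtains the characterization as a by-product of its local-maximum analysis, and only ever invokes the \emph{non-strict} comparison theorem, so you would need to supply a reference for the strict version (it holds for irreducible compact positive operators, but it is a genuinely stronger statement than the one the paper cites). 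Note also that to get $T_{\bm{\kk}_\oaa M_{\bm{\eta}_\oaa}}\neq T_{\bm{\kk}_\oaa}$ from $\bm{\eta}_\oaa\neq\un$ you need the observation, made explicitly in the paper, that irreducibility forces $\kk(\oa,y)>0$ for a.e.\ $y\in\oa$.

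The gap is in the step ``local maximum implies global maximum.'' You pick \emph{some} set $B\subset\oa$ of positive measure with $\eta^*<1-\delta$ on $B$ and perturb to $\eta_\varepsilon=\eta^*+\varepsilon\ind{B}$, then claim $R_e(\eta_\varepsilon)>R_e(\eta^*)$ by ``strict monotonicity applied to $T_{\bm{\kk}_\oaa}$.'' But strict monotonicity compares $T_{\kk\eta^*}$ with $T_{\kk\eta_\varepsilon}$, and it requires the \emph{dominating} operator $T_{\kk\eta_\varepsilon}$ to be irreducible, which it need not be when $\eta_\varepsilon$ vanishes on a set of positive measure; the conclusion itself can fail. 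Take $\Omega=\{1,2,3\}$ with counting measure and the irreducible cyclic kernel $\kk(2,1)=\kk(3,2)=\kk(1,3)=1$ (all other entries zero), so that $R_e(\eta)=\bigl(\eta(1)\eta(2)\eta(3)\bigr)^{1/3}$ and $R_0=1$. For $\eta^*=\ind{\{1\}}$ and $B=\{2\}$ one gets $R_e(\eta^*+\varepsilon\ind{B})=0=R_e(\eta^*)$ for every $\varepsilon$: no contradiction is produced, even though $\eta^*$ is indeed not a local maximum. The repair is to perturb in the direction $\un-\eta^*$, i.e.\ $\eta_\varepsilon=\eta^*+\varepsilon(\un-\eta^*)$, so that $\eta_\varepsilon\geq\varepsilon\un$ and $\kk\eta_\varepsilon$ is irreducible; then your strict-monotonicity argument does apply (since $\eta^*\neq\un$ forces $T_{\kk\eta^*}\neq T_{\kk\eta_\varepsilon}$) and yields $R_e(\eta^*)<R_e(\eta_\varepsilon)$, contradicting local maximality. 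This is exactly the perturbation the paper uses; the paper then concludes not by strict monotonicity but by a first-order eigenvalue expansion $R_e(\eta+\varepsilon(\un-\eta))=R_e(\eta)+\varepsilon\langle\vg,T_{\kk(\un-\eta)}\vd\rangle+O(\varepsilon^2)$ with a.e.\ positive left and right Perron eigenvectors, first for $\inf\eta>0$ and then reducing the general case to that one. With the corrected perturbation direction your route is actually slightly more elementary, as it avoids the perturbation formula entirely.
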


\begin{proof}
  Since the function  $R_e$ is homogeneous, see
  Proposition~\ref{prop:R_e}, we deduce from Lemma~\ref{lem:c-dec+L-hom}
  that Assumption~\ref{hyp:loss-topo}~\ref{hyp:loss} holds. Using
  Theorem~\ref{th:continuity-R}, for the continuity, 
  Proposition~\ref{prop:R_e}, for the monotonicity of the function
  $R_e$,  and the fact that  $R_0>0$,
  the   hypotheses on the loss in Assumption~\ref{hyp:loss-topo}~\ref{hyp:loss+cost} hold. 
   \medskip

   We  now prove  Point~\ref{it:Re-mono-H3}.  We first  assume that  the
   kernel  $\kk$ is  irreducible with  $R_0>0$. In  particular, we  have
   a.e.\  that  $\kk(\Omega, y)>0$.   Let  $\eta\in\Deltad$  be a  local
   maximum; we want to show that it is also a global maximum.

   Suppose  first  that $\inf\eta  >0$.  Then  $\kk\eta$ is  irreducible
   non-zero   with   finite   double  norm.    According   \cite[Theorem
   V.6.6]{schaefer_banach_1974} and  since $T_{\kk \eta}=  T_\kk M_\eta$
   is  compact,   the  eigenspace   of  $T_{\kk  \eta}$   associated  to
   $R_e(\eta)$ is  one-dimensional and it  is spanned by a  vector $\vd$
   such  that  $\vd>0$  a.e.,  and the  corresponding  left  eigenvector
   associated  to  $R_e(\eta)$,  say  $\vg$, can  be  chosen  such  that
   $\langle  \vg,   \vd\rangle=1$  and   a.e.\  $\vg>0$.   According  to
   \cite[Theorem       2.6]{EffectivePertuBenoit},      applied       to
   $L_0=T_{\kk\eta}$ and $L =  T_{\kk(\eta + \varepsilon(\un-\eta))}$ with
   $\varepsilon\in     (0,     1)$,      we     have,     using     that
   $\norm{L_0        -        L}_{L^p}=O(\varepsilon)$        thanks        to
   \eqref{eq:double-norm-norm}:
  \[
    R_e(  \eta+\varepsilon(\un-\eta)) =  R_e(\eta) +  \varepsilon \langle
    \vg,T_{\kk(\un-\eta)} \vd  \rangle + O(\varepsilon^2).
  \]
  Since $R_e$ has a local maximum at $\eta$, the first order term on the
  right hand side vanishes, so $\vg(x)\kk(x,y)(\un-\eta(y))\vd(y) = 0$ for
  $\mu$  almost every   $x$ and  $y$. Since  $\vg$ and  $\vd$ are  positive
  a.e.\    and   $\kk$    is   irreducible,   we    get   that
  $\kk(\Omega, y)(\un-  \eta(y))=0$ a.e.\  and thus a.e.\ $\eta(y)  = 1$.
    Therefore  $\eta = \un$, which is a  global maximum for
  $R_e$.

  Finally, suppose  that $\inf\eta =  0$. Let $G$  be an open  subset of
  $\Deltad$  on which  $R_e\leq  R_e(\eta)$ and  with  $\eta\in G$.  For
  $\varepsilon>0$        small        enough,        the        strategy
  $\eta_\varepsilon  = \eta  +\varepsilon(\un-\eta)$  belongs  to $G$  and
  satisfies $R_e(\eta) \leq R_e(\eta_\varepsilon) \leq R_e(\eta)$ (where
  the   first   inequality  comes   from   the   fact  that   $R_e$   is
  non-decreasing). Therefore $\eta_\varepsilon$ is  a local maximum with
  $\inf \eta_\varepsilon\geq \varepsilon$, and thus, thanks to the first
  part of  the proof, $\eta_\varepsilon=\un$. This  readily implies that
  $\eta=\un$.   We  deduce that  if  $\eta$  is  a local  maximum,  then
  $\eta=\un$ and thus  it is a global maximum.  This  ends the proof for
  the irreducible case when $R_0>0$.  \medskip

  Recall  that  $R_e$  and  $R_0$  respectively  denote  $R_e[\kk]$  and
  $R_0[\kk]$.  To treat the monatomic  case, recall that for any~$\eta$,
  we know by~\eqref{eq:decomp} that:
  \[
    R_e(\eta) = R_e[\bm{\kk}_\oaa](\bm{\eta}_\oaa),
  \]
  where $\bm{\kk}_\oaa$ (resp.\ $\bm{\eta}_\oaa$)  is the restriction of
  $\kk$     (resp.\      $\eta$)     to     the      atom~$\oa$,     and
  $R_0=R_0[\bm{\kk}_\oaa]>0$.  Let $\eta\in \Deltad$  be a local maximum
  for  $R_e$. Then  $\bm{\eta}_\oaa$  is  a   local  maximum  for
  $R_e[\bm{\kk}_\oaa]$.   We deduce  from the  first part  of the  proof
  applied    to   the    irreducible    kernel   $\bm{\kk}_\oaa$    that
  $\bm{\eta}_\oaa=\ind{\mathrm{a}}$,  and thus  $\eta\geq \ind{\oa}$  as
  well as $R_e(\eta)\geq R_e(\ind{\oa})=R_e(\un)$.  Thus,
  the strategy $\eta$ is a global maximum.  This implies that Assumption
  \ref{hyp:loss-topo}~\ref{hyp:loss**} holds.

  Use that $\ind{\mathrm{a}}$,  the unity function defined  on $\oa$, is
  the only global  maximum of $R_e[\kk_\mathrm{a}]$ thanks  to the first
  part of the proof, to deduce that  $\eta$ is a global maximum of $R_e$
  if and only if $\eta\geq \ind{\oa}$ (in $L^\infty $).
\end{proof}

\subsection{The SIS model}
\label{sec:SIS-hyp} 

The        loss        $\loss=\I$       does        not        satisfies
Assumption~\ref{hyp:loss-topo}~\ref{hyp:loss**} in general even when the
kernel  $\kk=k/\gamma$   is  irreducible   with  $R_0=R_0[\kk]>0$.    Indeed,  by
continuity    of     $R_e$,    there    exists    a     (weak-*)    open
neighborhood~$G$ of $\zero$ such that  $R_e(\eta) < 1$ for all
$\eta\in G$:    consequently   $\I$    is   identically    zero
on~$G$,  and any  $\eta\in G$ is  a local  maximum of
$\loss=\I$. However,  these maxima are  not global in  the super-critical
regime  where~$\I(\un)>0$  (and  $R_0>1$).   For  this  reason,  we  shall
consider           the           following          variant           of
Assumption~\ref{hyp:loss-topo}~\ref{hyp:loss**},  where   one  does  not
consider the zeros of the loss.

%%% To state a modified version of  Assumption 3
\theoremstyle{plain}
\newtheorem*{repeatedAssumption}{Assumption~\ref{hyp:loss-topo}}

\begin{repeatedAssumption}[On the loss]
  \label{hyp:loss=I}
Let  $\loss$ be  a function from $\Deltad$  endowed with 
  the weak-* topology to $\R$.
 \begin{enumerate}[label=(iii')]
 \item \textbf{Maxima.} \label{item:hyp:loss=I}
   Any  local maximum $\eta$ of the loss function $\loss$, such that
 $\loss(\eta)>0$, is a global maximum.
  \end{enumerate}
\end{repeatedAssumption}

We are now ready to check that Assumption~\ref{hyp:loss=I}~\ref{item:hyp:loss=I} holds for the
loss $\loss=\I$ when the kernel $k$ is monatomic. 
Recall $\mathfrak{g}\in \Deltac$ is the maximal equilibrium.

\begin{lemma}\label{lem:L**=I}
  %%% LEMMA
  Consider the SIS model $\param=[(\Omega,  \cf, \mu), k, \gamma]$ under
  Assumption~\ref{hyp:k-g} with the loss function $\loss=\I$, and simply write $R_0$ for  $R_0[k/\gamma]$.
  \begin{enumerate}[(i)]
  \item                                                  \label{item:I0}
    Assumption~\ref{hyp:loss-topo}~\ref{hyp:loss+cost}  holds, and  thus
    $\I(\un)>0$, if $R_0>1$.
\item\label{item:I1}
  Assumption~\ref{hyp:loss-topo}~\ref{hyp:loss} holds.

\item\label{item:I2}  If   $k$  is   monatomic  and   $R_0>1$,  then
   Assumption   \ref{hyp:loss=I}~\ref{item:hyp:loss=I}   holds.
Furthermore,   $\eta\in  \Deltad$  is a global maximum of $\I$ if and
only if $\eta\geq \ind{\{ \mathfrak{g}>0\}}$ (in $L^\infty $).

\end{enumerate}
\end{lemma}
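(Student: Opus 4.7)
For parts~\ref{item:I0} and~\ref{item:I1}, I would just assemble previously established properties of $\I$: continuity is Theorem~\ref{th:continuity-I}; non-decreasingness and sub-homogeneity come from Proposition~\ref{prop:I}; combined with $R_e(\zero)=0$ and $R_e(\un)=R_0>1$, Proposition~\ref{prop:I}~\ref{prop:min-I} gives $\I(\zero)=0$ and $\I(\un)>0$, settling~\ref{item:I0}. For~\ref{item:I1}, since $\I$ is non-negative, non-decreasing and sub-homogeneous, Lemma~\ref{lem:c-dec+L-hom} applies directly.

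For~\ref{item:I2}, I would set $S=\{\mathfrak{g}>0\}$ and first dispose of the characterization of global maxima. If $\eta\geq \ind{S}$, then $\mathfrak{g}$ is supported in $S$ where $\eta\equiv 1$, so $(\un-\eta)\mathfrak{g}=0$ a.e.\ and hence $\Tinf_{k\eta}(\mathfrak{g})=\Tinf_k(\mathfrak{g})$, yielding $F_\eta(\mathfrak{g})=F(\mathfrak{g})=0$; maximality of $\mathfrak{g}_\eta$ together with~\eqref{eq:mon-gh} then forces $\mathfrak{g}_\eta=\mathfrak{g}$, so $\I(\eta)=\int_S\mathfrak{g}\,\mathrm{d}\mu=\I(\un)$. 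Conversely, if $\I(\eta)=\I(\un)$, the chain $\mathfrak{g}_\eta\eta\leq \mathfrak{g}_\eta\leq \mathfrak{g}$ combined with equality of the integrals forces $\mathfrak{g}_\eta\eta=\mathfrak{g}$ a.e.; on $S$ this successively gives $\mathfrak{g}_\eta=\mathfrak{g}$ and $\eta=1$, so $\eta\geq \ind{S}$.

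The core step is to show that any local maximum $\eta$ with $\I(\eta)>0$ satisfies $\eta\geq \ind{S}$. I would use the perturbation $\eta_\varepsilon=\eta+\varepsilon(\un-\eta)$, which converges to $\eta$ in $L^\infty$-norm and hence weakly-* as $\varepsilon\downarrow 0$. The local maximum hypothesis gives $\I(\eta_\varepsilon)\leq \I(\eta)$ for small $\varepsilon$, while monotonicity provides the reverse inequality; combining and rearranging yields
\[
  0 \;=\; \I(\eta_\varepsilon)-\I(\eta)
    \;=\; \int_\Omega (\mathfrak{g}_{\eta_\varepsilon}-\mathfrak{g}_\eta)\,\eta_\varepsilon\,\mathrm{d}\mu
    \,+\, \varepsilon \int_\Omega \mathfrak{g}_\eta(\un-\eta)\,\mathrm{d}\mu,
\]
with both terms non-negative by~\eqref{eq:mon-gh}, so $\mathfrak{g}_\eta(\un-\eta)=0$ a.e., i.e., $\eta\equiv 1$ on $S_\eta:=\{\mathfrak{g}_\eta>0\}$. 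This identity in turn forces $\Tinf_{k\eta}(\mathfrak{g}_\eta)=\Tinf_k(\mathfrak{g}_\eta)$ and hence $F(\mathfrak{g}_\eta)=F_\eta(\mathfrak{g}_\eta)=0$; since $\I(\eta)>0$ gives $\mathfrak{g}_\eta\neq \zero$, the monatomic hypothesis together with $R_0>1$ and Lemma~\ref{lem:g>0=oa+oi} identify $\mathfrak{g}_\eta$ as the unique non-zero equilibrium $\mathfrak{g}$, so $S_\eta=S$ and $\eta\geq \ind{S}$. The main obstacle circumvented here is the absence of a direct differentiability formula for $\eta\mapsto \mathfrak{g}_\eta$: monotonicity of this map, combined with the sign structure of the decomposition above, substitutes for a first-order perturbation analysis.
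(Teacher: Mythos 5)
Your proof is correct, and parts~\ref{item:I0}--\ref{item:I1} follow the paper verbatim. For part~\ref{item:I2} you use the same overall strategy as the paper (the perturbation $\eta_\varepsilon=\eta+\varepsilon(\un-\eta)$, the monotonicity squeeze via~\eqref{eq:mon-gh}, and Lemma~\ref{lem:g>0=oa+oi}), but the decisive step differs in a genuine way. The paper telescopes $\I(\eta_\varepsilon)-\I(\eta)$ the other way, obtaining $\int\mathfrak{g}_{\eta_\varepsilon}(\un-\eta)\,\rd\mu=0$, and must then identify $\{\mathfrak{g}_{\eta_\varepsilon}>0\}$ with $\{\mathfrak{g}>0\}$ by arguing that $k\eta_\varepsilon$ (which has the same support as $k$ because $\eta_\varepsilon\geq\varepsilon>0$) is monatomic with the same atom and the same smallest invariant set. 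Your splitting instead yields $\int\mathfrak{g}_{\eta}(\un-\eta)\,\rd\mu=0$, i.e.\ $\eta\equiv 1$ on $\{\mathfrak{g}_\eta>0\}$, from which $\Tinf_{k\eta}(\mathfrak{g}_\eta)=\Tinf_k(\mathfrak{g}_\eta)$ and hence $F(\mathfrak{g}_\eta)=0$; the uniqueness statement of Lemma~\ref{lem:g>0=oa+oi} then gives $\mathfrak{g}_\eta=\mathfrak{g}$ outright. What your route buys is that you never need to analyse the perturbed kernel $k\eta_\varepsilon$, and the identification $\mathfrak{g}_\eta=\mathfrak{g}$ comes for free, which also feeds directly into your (correct, and slightly more detailed than the paper's) characterization of the global maxima; what the paper's route buys is that it never needs to recognise $\mathfrak{g}_\eta$ as an equilibrium of the unvaccinated dynamics, only to compare supports. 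Both arguments rest on the same two pillars, so I would classify yours as a clean variant rather than a new proof.
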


\begin{proof}
   Since the loss $\I$ is sub-homogeneous, see
  Proposition~\ref{prop:I}, we deduce from Lemma~\ref{lem:c-dec+L-hom} that
  Assumption~\ref{hyp:loss-topo}~\ref{hyp:loss} holds. 
   Using  Theorem~\ref{th:continuity-I}
  (for the  continuity), Proposition~\ref{prop:I} (for  the monotonicity
  of the function  $\I$), and the fact that  $\I(\un)>0$ if  $R_0>1$,
  see~\eqref{eq:gh>0}, we obtain that the hypothesis
  on  the   loss  in  Assumption~\ref{hyp:loss-topo}~\ref{hyp:loss+cost}
  hold if $R_0>1$.
  \medskip
  
  We  now  prove  Point~\ref{item:I2}.  Assume that  $R_0>1$,  that  is,
  $\I(\un)>0$, and set $\kk=k/\gamma$.  Let $\mathfrak{g}$ be the maximal
  equilibrium  which  is non-zero  as  $R_0>1$.
  Recall that being $\kk'$-invariant depends  only on the support of the
  kernel $\kk'$. Since the kernels $\kk$  and $k$ have the same support,
  and $k$ is monatomic, we deduce  that $\kk$ is monatomic with the same
  atom $\oa$ and  same smallest invariant set containing  $\oa$ given by
  $\{  \mathfrak{g}>0\}$ thanks  to Lemma~\ref{lem:g>0=oa+oi}.   Suppose
  that  $\I$   has  a  local   maximum  at  some   $\eta\in\Deltad$  and
  $\I(\eta)>0$.       For     $\varepsilon\in      (0,     1)$,      set
  $\eta_\varepsilon=  \eta+\varepsilon(\un-\eta)$.    We  have   that  for
  $\varepsilon>0$ small enough:
\begin{equation}
   \label{eq:ineg=I}
  \I(\eta) \geq \I(\eta_\varepsilon)= \int_\Omega \mathfrak{g}_{\eta_\varepsilon}\,
    \eta_\varepsilon \, \mathrm{d}\mu \geq \int_\Omega \mathfrak{g}_{\eta_\varepsilon}\,
    \eta\, \mathrm{d}\mu
    \geq \int_\Omega
    \mathfrak{g}_{\eta}\, \eta\, \mathrm{d}\mu
    = \I(\eta),
 \end{equation}
 where     we    used     that    $\eta\leq     \eta_\varepsilon$    and
 $0\leq  \mathfrak{g}_\eta  \leq  \mathfrak{g}_{\eta_\varepsilon}$,  see
 \eqref{eq:mon-gh}.  Therefore  all these  quantities are  equal.  Since
 $k  \eta_\varepsilon$ and  $k$ have  the same  support, we  deduce that
 $k  \eta_\varepsilon$  is monatomic  with  non-zero  atom $\oa$.   From
 Lemma~\ref{lem:g>0=oa+oi},       we        also       obtain       that
 $\{ \mathfrak{g}_{\eta_\varepsilon}>0\}$ and  $\{ \mathfrak{g}>0\}$ are
 equal, being equal to the  smallest invariant set containing $\oa$.  We
 deduce from~\eqref{eq:ineg=I}, as all  the inequalities are equalities,
 that  $\eta_\varepsilon=\eta$ a.s.\  on  $\{  \mathfrak{g}>0\}$,   and  thus
 $\eta\geq  \ind{\{  \mathfrak{g}>0\}}$ a.s..  Recall  from~\eqref{eq:mon-gh}
 that $\mathfrak{g}_{\eta}\leq  \mathfrak{g}$.  So $\mathfrak{g}_{\eta}$
 is zero outside $\{ \mathfrak{g}>0\}$,  and we deduce that changing the
 value of $\eta$ outside $\{ \mathfrak{g}>0\}$ does not affect the value
 of $\I(\eta)$.  In conclusion, $\eta\in \Deltad$ is a local maximum such
 that $\I(\eta)>0$ if and  only if $\eta\geq \ind{\{ \mathfrak{g}>0\}}$ a.s.,
 and    thus     is    a    global    maximum.
\end{proof}

\section{Technical proofs: properties of~\texorpdfstring{$\I$}{I} and of the maximal
  equilibrium} \label{sec:technical_proofs}

In the SIS model, in order to  stress, if necessary, the dependence of a
quantity $H$, such  as $F_\eta$, $R_e$ or  $\mathfrak{g}_{\eta}$, in the
parameters $k$ and $\gamma$  (which satisfy Assumption~\ref{hyp:k-g}) of
the  model, we  shall  write  $H[k, \gamma]$.  Recall  that  if $k$  and
$\gamma$  satisfy Assumption~\ref{hyp:k-g},  then the  kernel $k/\gamma$
has a finite double norm on $L^p$  for some $p\in (1, +\infty )$ (as
the measure $\mu$ is finite). We now
consider      the     continuity      property      of     the      maps
$\eta       \mapsto       \mathfrak{g}_{\eta}[k,      \gamma]$       and
$(k,\gamma,\eta) \mapsto \mathfrak{g}_{\eta}[k, \gamma]$.
Notice the former function defined on $\Deltac$ is well defined on
$\Deltad$ thanks to~\eqref{eq:mon-gh}.

\begin{lemma}\label{lem:cvgn2}
  %%% LEMMA: WEAK CV => POINTWISE
  Let $((k_n, \gamma_n), n\in \N)$ and $(k,\gamma)$ be kernels and functions satisfying
  Assumption~\ref{hyp:k-g} and $(\eta_n, \, n \in \N)$ be a sequence of elements of
  $\Deltad$ which weak-* converges  to $\eta$.
  \begin{enumerate}[(i)]
  \item\label{lem:cvg2-h} We have $\lim_{n\rightarrow \infty }
    \mathfrak{g}_{\eta_n}[k,\gamma]=\mathfrak{g}_{\eta}[k, \gamma]$ $\mu$-almost surely.
  \item\label{lem:cvg2-kgh} Assume furthermore there exists $p'\in (1,
    +\infty )$ such that $\kk=\gamma^{-1} k$ and $(\kk_n=\gamma^{-1}_n k_n,
    n\in \N)$ have finite double norm on $L^{p'}$ and that $\lim_{n\rightarrow\infty }
    \norm{\kk_n -\kk}_{p',q'}=0$. Then, we have $\lim_{n\rightarrow
    \infty } \mathfrak{g}_{\eta_n}[k_n,\gamma_n]=\mathfrak{g}_{\eta}[k, \gamma]$
    $\mu$-almost surely.
  \end{enumerate}
\end{lemma}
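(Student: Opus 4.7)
The plan is to argue by extraction of subsequences: it suffices to show that every subsequence of $(\mathfrak{g}_{\eta_n}[k_n,\gamma_n], n\in \N)$ (and $(\mathfrak{g}_{\eta_n}[k,\gamma], n\in \N)$ in part~\ref{lem:cvg2-h}) admits a further subsequence which converges $\mu$-a.s.\ to $\mathfrak{g}_\eta[k,\gamma]$. Since $\Deltad$ is weak-* sequentially compact by Lemma~\ref{lem:D-compact}, a diagonal extraction provides subsequences (still indexed by $n$) along which $\mathfrak{g}_{\eta_n}$ weak-* converges to some $h\in\Deltad$ and $\eta_n\mathfrak{g}_{\eta_n}$ weak-* converges to some $\phi\in\Deltad$.

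To upgrade these weak-* limits to pointwise a.e.\ limits, I would exploit the fixed-point form of the equilibrium equation $F_{\eta_n}(\mathfrak{g}_{\eta_n})=\zero$, namely
\[
  \mathfrak{g}_{\eta_n}(x)= \frac{T_{k_n}(\eta_n\mathfrak{g}_{\eta_n})(x)}{\gamma_n(x)+T_{k_n}(\eta_n\mathfrak{g}_{\eta_n})(x)}.
\]
For $\mu$-a.e.\ $x$, Assumption~\ref{hyp:k-g} and the finiteness of $\mu$ ensure $k(x,\cdot)\in L^{q'}(\mu)$, so the weak-* convergence $\eta_n\mathfrak{g}_{\eta_n}\to\phi$ in $L^{p'}$ yields the pointwise convergence $T_k(\eta_n\mathfrak{g}_{\eta_n})(x)\to T_k(\phi)(x)$; in part~\ref{lem:cvg2-kgh}, the condition $\norm{\kk_n-\kk}_{p',q'}\to 0$ allows one to replace $k_n, \gamma_n$ by $k, \gamma$ in this identity. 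Taking pointwise limits then gives $\mathfrak{g}_{\eta_n}\to \tilde h$ $\mu$-a.e., where $\tilde h= T_k(\phi)/(\gamma+T_k(\phi))$. Comparing the weak-* limit $h$ with the a.e.\ limit $\tilde h$ via dominated convergence forces $h=\tilde h$ $\mu$-a.s., and the same argument applied to $\eta_n\mathfrak{g}_{\eta_n}$ gives $\phi=\eta h$; hence $h$ satisfies $F_\eta[k,\gamma](h)=\zero$, so $h\leq \mathfrak{g}_\eta[k,\gamma]$ by Lemma~\ref{lem:Fh>0}.

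For the reverse inequality I would invoke the maximality criterion Lemma~\ref{lem:Fh>0}~\ref{cor:h=g}: since $\mathfrak{g}_{\eta_n}$ is the maximal equilibrium for $F_{\eta_n}$, one has $R_e[\kk_n](\eta_n(\un-\mathfrak{g}_{\eta_n}))\leq 1$. By the same dominated-convergence reasoning, $\eta_n(\un-\mathfrak{g}_{\eta_n})$ weak-* converges to $\eta(\un-h)$, so passing to the limit via Theorem~\ref{th:continuity-R} in part~\ref{lem:cvg2-h} or Proposition~\ref{prop:Re-stab} in part~\ref{lem:cvg2-kgh} gives $R_e[\kk](\eta(\un-h))\leq 1$. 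Combined with $F_\eta(h)=\zero$, Lemma~\ref{lem:Fh>0}~\ref{cor:h=g} identifies $h=\mathfrak{g}_\eta[k,\gamma]$, closing the subsequence argument.

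The step I expect to be the main obstacle is the pointwise passage to the limit in the fixed-point identity in part~\ref{lem:cvg2-kgh}: the identity mixes $k_n$ and $\gamma_n$ individually while the hypothesis controls only the ratio $\kk_n=k_n/\gamma_n$. A natural remedy is to rewrite the equilibrium in terms of $h_n=\gamma_n\mathfrak{g}_{\eta_n}$, which satisfies $h_n=\gamma_n T_{\kk_n\eta_n}(h_n)/(\gamma_n+T_{\kk_n\eta_n}(h_n))$ so that the nonlinear part depends on the kernels only through $\kk_n$; one then still needs enough additional control on $\gamma_n$ to transfer convergence back to $\mathfrak{g}_{\eta_n}$, and this is where the bulk of the technical work should lie. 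The other ingredients---weak-* extraction, dominated convergence, and the previously established continuity and stability of $R_e$---are then applied in a relatively routine manner.
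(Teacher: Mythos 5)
Your part~(i) is essentially the paper's own argument: extract a weak-* convergent subsequence of $\eta_n\mathfrak{g}_{\eta_n}$, pass to the pointwise limit in the fixed-point identity $\mathfrak{g}_{\eta_n}=T_{k}(\eta_n\mathfrak{g}_{\eta_n})/(\gamma+T_{k}(\eta_n\mathfrak{g}_{\eta_n}))$, identify the limit as an equilibrium of $F_\eta$, and upgrade it to the maximal one through the criterion $R_e(\un-h)\leq 1$ of Lemma~\ref{lem:Fh>0}~\ref{cor:h=g} combined with Theorem~\ref{th:continuity-R}. This is sound (to get a.e.\ convergence of the full sequence rather than of sub-subsequences, apply the subsequence principle pointwise, for each fixed $x$ with $k(x,\cdot)\in L^1$, to the real sequence $T_k(\eta_n\mathfrak{g}_{\eta_n})(x)$; the limit is subsequence-independent once the equilibrium is identified as $\mathfrak{g}_\eta$).

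The genuine gap is in part~(ii), exactly at the step you flag as the main obstacle and explicitly leave unresolved. Your proposed remedy (substituting $h_n=\gamma_n\mathfrak{g}_{\eta_n}$) does not work: as you yourself observe, the resulting identity still carries $\gamma_n$ explicitly, and no hypothesis gives you separate control on $\gamma_n$. The correct move is much simpler: divide the numerator and denominator of your fixed-point identity by $\gamma_n(x)$, which gives
\[
  \mathfrak{g}_{\eta_n}
  =\frac{\Tinf_{\kk_n}(\eta_n\mathfrak{g}_{\eta_n})}{1+\Tinf_{\kk_n}(\eta_n\mathfrak{g}_{\eta_n})},
  \qquad \kk_n=\gamma_n^{-1}k_n,
\]
because $T_{k_n}(\cdot)(x)/\gamma_n(x)=\Tinf_{\gamma_n^{-1}k_n}(\cdot)(x)$. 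The point you seem to have missed is that the hypothesis of part~(ii) is deliberately stated for $\gamma^{-1}k$ (division by $\gamma$ in the \emph{first} variable of the kernel), not for the next-generation kernel $k\gamma^{-1}$; the remark following Proposition~\ref{prop:I-stab} insists on this distinction, and it is precisely what makes the identity above depend on $(k_n,\gamma_n)$ only through the kernel $\kk_n$ that the hypothesis controls. Writing $\Tinf_{\kk_n}(h_n)=(\Tinf_{\kk_n}-\Tinf_{\kk})(h_n)+\Tinf_{\kk}(h_n)$ with $h_n=\eta_n\mathfrak{g}_{\eta_n}$, the second term converges pointwise by weak-* convergence and the first has $L^{p'}$-norm at most $\norm{\kk_n-\kk}_{p',q'}\to 0$, hence vanishes a.e.\ along a further subsequence; no additional control on $\gamma_n$ is required. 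With this repair, the remainder of your argument (identification of the limit as an equilibrium, then maximality via Proposition~\ref{prop:Re-stab} and Lemma~\ref{lem:Fh>0}~\ref{cor:h=g}) goes through as in part~(i).
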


\begin{proof}
  The  proof of  \ref{lem:cvg2-h}  and  \ref{lem:cvg2-kgh} being  rather
  similar, we only  provide the latter and indicate  the difference when
  necessary.           To          simplify,           we          write
  $g_n=\mathfrak{g}_{\eta_n}[k_n,\gamma_n]$.            We           set
  $h_n  =  \eta_ng_n  \in  \Deltad$  for $n\in  \N$.  Since  $\Deltad$  is
  sequentially weak-*  compact, up to  extracting a subsequence,  we can
  assume that  $h_n$ weak-* converges  to a  limit $h\in  \Deltad$. Since
  $F_{\eta_n}[k_n,      \gamma_n](g_n)=0$     for      all     $n\in\N$,
  see~\eqref{eq:F(g)=0}, we have:
  \begin{equation}\label{eq:continuity-I} g_n = \frac{\Tinf_{\kk_n}(\eta_n g_n)}{1+
    \Tinf_{\kk_n}(\eta_n g_n)}
    = \frac{\Tinf_{\kk_n} (h_n)}{1+ \Tinf_{\kk_n}(h_n)} \cdot
  \end{equation}
  We   set    $g=\Tinf_\kk(h)/(1   +   \Tinf_\kk(h))$.     Notice   that
  $ \Tinf_{\kk_n}(h_n)=(\Tinf_{\kk_n}-\Tinf_\kk)(h_n) + \Tinf_\kk(h_n)$.
  We   have    $\lim_{n\rightarrow\infty}   \Tinf_\kk(h_n)=\Tinf_\kk(h)$
  pointwise.                                                       Since
  $\norm{(\Tinf_{\kk_n}-\Tinf_\kk)(h_n)}_{p'}\leq            \norm{\kk_n
    -\kk}_{p',q'}$,  up  to  taking   a  sub-sequence,  we  deduce  that
  $\lim_{n\rightarrow\infty}   (\Tinf_{\kk_n}-\Tinf_\kk)(h_n)=0$  almost
  surely.   (Notice the  previous  step  is not  used  in  the proof  of
  \ref{lem:cvg2-h}             as            $\kk_n=\kk$             and
  $\lim_{n\rightarrow\infty} \Tinf_k(h_n)=\Tinf_k(h)  $ pointwise.) This
  implies that  $g_n$ converges almost  surely to $g$. By  the dominated
  convergence theorem  (recall $\mu$  is finite),  we deduce  that $g_n$
  converges also  in $L^p$ to $g$.   This proves that $h=\eta  g$ almost
  surely. We  get $g=\Tinf_\kk(\eta g)/(1+ \Tinf_\kk(\eta  g))$ and thus
  $F_{\eta}[k,    \gamma](g)=0$ in $\cl$:    $g$    is   an    equilibrium    for
  $F_\eta[k,  \gamma]$.   We  recall from  \cite[Section~3]{ddz-Re}  the
  functional equality $R_e[k'h]=R_e[hk']$, where $k'$ is a kernel, $h$ a
  non-negative  functions such  that the  kernels $k'h$  and $hk'$  have
  finite double norm.  We get:
  \begin{align*}
    R_e[k\eta/\gamma](1- g)= R_e[\kk](\eta(1-g))
    &= \lim_{n\rightarrow\infty }    R_e[\kk_n](\eta_n (1- g_n))\\
   &= \lim_{n\rightarrow\infty }    R_e[k_n\eta_n/\gamma_n](1- g_n)\\
   &\leq  1,
  \end{align*}
 where we used
  the  weak-*  continuity  and  the  stability  of  $R_e$  from  Theorem
  \ref{th:continuity-R}   and  Proposition~\ref{prop:Re-stab}   for  the
  second  equality,   and  Lemma~\ref{lem:Fh>0}~\ref{cor:h=g}   for  the
  inequality.       (Only       the      weak-*       continuity      of
  $\eta'\mapsto   R_e[k/\gamma](\eta')$  is   used  in   the  proof   of
  \ref{lem:cvg2-h} to get  $R_e[k/\gamma](\eta(1-g))\leq 1$.)  Since $g$
  is   an  equilibrium   for   $F_\eta[k,  \gamma]$,   we  deduce   from
  Lemma~\ref{lem:Fh>0}~\ref{cor:h=g}, with $k$ replaced by $k\eta$, that
  $g$      is      the      maximal      equilibrium,      that      is,
  $g =\mathfrak{g}_{\eta}[k,\gamma]$.
\end{proof}

\begin{proof}[Proofs of Theorem \ref{th:continuity-I} and Proposition \ref{prop:I-stab}]
  Under  the  assumptions  of  Lemma~\ref{lem:cvgn2},  taking  the  pair
  $(k_n, \gamma_n)$  equal to $(k,\gamma)$ in  the case \ref{lem:cvg2-h}
  therein,                 we                deduce                 that
  $(\eta_n\,   \mathfrak{g}_{\eta_n}[k_n,\gamma_n],  n\in   \N)$  weak-*
  converges  to  $\eta  \,\mathfrak{g}_{\eta}[k,\gamma]$.  This  implies
  that:
  \[
    \lim_{n\rightarrow \infty } \I[k_n, \gamma_n](\eta_n)
    = \lim_{n\rightarrow \infty } \int_\Omega \eta_n\,
    \mathfrak{g}_\eta[k_n, \gamma_n]\, \mathrm{d}\mu
    = \int_\Omega \eta\,
    \mathfrak{g}_\eta[k, \gamma]\, \mathrm{d}\mu
    = \I[k, \gamma](\eta).
  \]

Taking $(k_n, \gamma_n)=(k, \gamma)$ provides the continuity of $\I[k,
\gamma]$ and thus Theorem \ref{th:continuity-I}. Then, arguing as in the
end of the proof of Proposition \ref{prop:Re-stab}, we get Proposition
 \ref{prop:I-stab}.
\end{proof}

\printbibliography
\end{document}